\documentclass{amcjou}

\usepackage{amsmath,amssymb,amsthm,amsfonts}
\usepackage{subfigure}
\usepackage[usenames,dvipsnames]{pstricks}
\usepackage{epsfig}
\usepackage{pst-grad} % For gradients
\usepackage{pst-plot} % For axes
\usepackage{thmbox}
\ExecuteOptions{dvips}

\newcommand{\For}{{\texttt{{FOR}}}}
\newcommand{\Do}{\texttt{{DO}}}
\newcommand{\If}{\texttt{{IF}}}
\newcommand{\Else}{\texttt{{ELSE}}}
\newcommand{\Then}{\texttt{{THEN}}}

\newcommand{\notalpha}{\overline\alpha}
\newcommand{\R}{\mathfrak d}
\newcommand{\la}{\langle}
\newcommand{\ra}{\rangle}

\renewcommand{\square}{\Box}
\newcommand{\mh}[1]{{#1}}
\newcommand{\tk}[1]{{#1}}
\newcommand{\w}[1]{{#1}}

\newtheorem[M]{algo}{Algorithm}[section]

\begin{document}
\sloppy

\begin{frontmatter}   %%  Title, information about author, abstract, etc.

\titledata{Fast Recognition of Partial Star Products and Quasi Cartesian Products}    % title of the paper
{ We thank Lydia Ostermeier for her insightful comments on graph bundles,
	as well as for \w{the suggestion of} the term "quasi product".
	This work was supported in part by the Deutsche Forschungsgemeinschaft
 (DFG) Project STA850/11-1 within the EUROCORES Program EuroGIGA (project
 GReGAS) of the European Science Foundation.
 This paper is based on part of the dissertation of  the third author.}                 % footnote on the title -- empty if not required

\authordata{Marc Hellmuth}            % First author name
{Center for Bioinformatics,
Saarland University,
%P.O. Box 15 11 50,
D - 66041 Saarbr\"{u}cken,
Germany \\
}    % Affiliation and address
{marc.hellmuth@bioinf.uni-sb.de}                     % E-mail address
{}%{Corresponding Author}          % Footnote on the first author (grant number, thanks,
                                         % web page, etc.) -- empty in not required

\authordata{Wilfried Imrich}            % Second author
{Chair of Applied Mathematics,
Montanuniversit{\"a}t,  A-8700
Leoben,
Austria}
{imrich@unileoben.ac.at}
{}                                       % No footnote!

\authordata{Tomas Kupka}            % Third author
{Department of Applied Mathematics,
VSB-Technical University of Ostrava,
Ostrava, 70833, Czech Republic\\
}
{tomas.kupka@teradata.com}
{}

\keywords{Cartesian product, quasi product, graph bundle, approximate product, partial star product, product relation.}               % Keywords
\msc{05C15, 05C10}                       % Math. Subj. Class. codes

\begin{abstract}
This paper
is concerned with the fast computation of a relation \w{$\R$} on the edge
set of connected graphs that plays a decisive role in the recognition of
approximate Cartesian products, the weak reconstruction of Cartesian
products, and the recognition of Cartesian graph bundles with a triangle
free basis.

A special case \w{of $\R$} is the relation $\delta^\ast$, whose convex closure yields
the \w{product relation $\sigma$ that induces the} prime factor decomposition of
connected graphs with respect to the Cartesian product.
For the construction of \w{$\R$} so-called Partial
Star Products are of particular interest.  Several special data structures are used
 that allow to compute Partial Star Products in constant time.
\w{These computations are tuned to the recognition of approximate graph products,
but also lead to} a linear time
algorithm for the computation of $\delta^\ast$ for graphs with maximum
bounded degree.
%As the computation of $\delta^\ast$ can easily be
%parallelized it can thus be used for a sublinear, parallel algorithm for
%the recognition of Cartesian products. However, currently it may need too many
%processors to be of practical interest.

%The relation $\delta^\ast$ is finer than the Cartesian product
%decomposition and can be used to characterize graphs with many interesting
%properties.
\w{Furthermore},  we define \emph{quasi Cartesian products} as graphs with
non-trivial $\delta^\ast$. We provide several examples, and show that
quasi Cartesian products can be recognized in linear time for graphs
with bounded maximum degree. \w{Finally}, we  note that
quasi products can be recognized in sublinear time with a parallel\w{ized} algorithm.

\end{abstract}

\end{frontmatter}   %% End of the front matter

%% Your article
%\item \texttt{thm} (Theorem)
%\item \texttt{prop} (Proposition)
%\item \texttt{lem} (Lemma)
%\item \texttt{cor} (Corollary)
%\item \texttt{claim} (Claim)
%\item \texttt{axiom} (Axiom)
%\item \texttt{conj} (Conjecture)
%\item \texttt{hypo} (Hypothesis)
%\item \texttt{assum} (Assumption)
%\item \texttt{defn} (Assumption)

\section{Introduction}
Cartesian products of graphs derive their popularity from their simplicity,
and their importance from the fact that many classes of graphs, such as
hypercubes, Hamming graphs, median graphs, benzenoid graphs, or Cartesian
graph bundles, are either Cartesian products or closely related to them
\cite{haimkl-2011}. As even slight disturbances of a product, such as
the addition or deletion of an edge, can destroy the product structure
completely \cite{Fei-1986}, the question arises whether it is possible to
restore the original product structure after such a disturbance. In other
words, given a graph, the question is, how close it is to a Cartesian
product, and whether one can find this product algorithmically.
Unfortunately, in general this problem can only be solved by heuristic
algorithms, as discussed in detail in \cite{HeImKu-2013}.
\w{That} paper
also presents several heuristic \w{algorithms} for the solution of this problem.

One of the main steps towards such algorithms is the computation of an
equivalence relation $\R_{|S_v}(W)^*$ on the edge-set of a graph. The
complexity \w{of} the computation of $\R_{|S_v}(W)^*$ in \cite{HeImKu-2013} is
$O(n\Delta^4)$, where $n$ is the number of vertices, and $\Delta$ the
maximum degree of $G$. Here we improve the recognition complexity of
$\R_{|S_v}(W)^*$ to $O(m\Delta)$, where $m$ is the number of edges of
$G$, \w{and thereby improve} the complexity of the just mentioned heuristic
algorithms.

A special case is the computation of the relation $\delta^\ast =
\R_{|S_v}(V(G))^*$. This relation defines the so-called quasi Cartesian
product, see Section \ref{sec:twist}. Hence, quasi products can be
recognized in $O(m \Delta)$ time. As the algorithm can easily be
parallelized, it leads to sublinear recognition of quasi Cartesian
products.

When the given graph $G$ is a Cartesian product from which
just one vertex was deleted, things are easier. In that case, the product
is uniquely defined and can be reconstructed in polynomial time from $G$,
see \cite{do-1975} and \cite{haze-1999}. In other words, if $G$ is given,
and if one knows that there is a Cartesian product graph $H$ such that $G =
H\smallsetminus x$, then $H$ is uniquely defined. Hagauer and \v{Z}erovnik
showed that the complexity of finding $H$ is $O(mn(\Delta^2+m))$. The
methods of the present paper will lead to a new algorithm of complexity
$O(m\Delta^2+\Delta^4)$ for the solution of this problem. This is
 \w{part of the dissertation} \cite{ku-2013} \w{of the third author}, and will be the topic of a subsequent publication.

%In \cite{imzmaze-2001} it is also shown that $k$-vertex-deleted
%Cartesian product graphs can be uniquely reconstructed if they have at
%least $k+1$ factors and each factor has more than $k$ vertices.
Another class of graphs that is closely related to Cartesian products are
Cartesian graph bundles, see Section \ref{sec:twist}. In \cite{impize-1997}
it was \w{proved} that Cartesian graph bundles over a triangle-free base can be
effectively recognized, and in \cite{PiZmZe-2001} it was shown that this
can be done in $O(mn^2)$ time. With the methods of this paper, we suppose
that one can improve it to $O(m\Delta)$ time.
This too will be published separately.

%Finally, we wish to mention that our algorithm also leads to a parallel
%sublinear recognition algorithm of quasi Cartesian products.

\section{Preliminaries}
We consider finite, connected undirected graphs $G=(V,E)$ without loops and
multiple edges. The {\it Cartesian product} $G_1\square G_2$ of graphs
$G_1=(V_1,E_1)$ and $G_2=(V_2,E_2)$ is a graph with vertex set
$V_1\times V_2$, where the vertices $(u_1,v_1)$ and $(u_2,v_2)$ are
adjacent if $u_1u_2\in E_1$ and $v_1=v_2$, or if
$v_1v_2\in E_2$ and $u_1=u_2$. The Cartesian
product is associative, commutative, and has the one vertex graph $K_1$ as
a unit \cite{haimkl-2011}.
By associativity we can write $G_1\square \cdots \square G_k$ for a product
$G$ of graphs $G_1,\ldots,$ $G_k$ and can label the vertices of $G$ by the
set of all $k$-tuples $(v_1, v_2,\dots,v_k)$, where $v_i\in G_i$ for $1\le
i\le k$. One says two edges have the same \emph{Cartesian color} if their
endpoints differ in the same coordinate.

A  graph $G$ is {\it prime} if the identity $G=G_1\square G_2$
implies that $G_1$ or $G_2$ is the one-vertex graph $K_1$. A representation
of a graph $G$ as a product $G_1\square G_2\square\dots\square G_k$ of
\mh{non-trivial} prime graphs is called a {\it prime factorization} of $G$.
It is well known
that every connected graph $G$ has a prime factor decomposition with
respect to the Cartesian product, and that this factorization is unique up
to isomorphisms and the order of the factors, see Sabidussi \cite{sa-1960}.
Furthermore, the prime factor decomposition can be computed in linear time,
see \cite{impe-2007}.

Following the notation in \cite{HeImKu-2013}, an induced cycle
on four vertices is called \emph{chordless square}. Let the edges $e=vu$
and $f=vw$ span a chordless square $vuxw$. Then $f$ is the
\emph{opposite} edge of the edge $xu$. The vertex $x$ is called \emph{top vertex}
(w.r.t. the square spanned by $e$ and $f$). A top vertex $x$ is
%\emph{unique} if $|N[x] \cap N[v]| = \w{|N[u] \cap N[w]| =}2$, where $N[u]$ denotes the closed
\emph{unique} if $|N(x) \cap N(v)|= 2$, where $N(u)$ denotes the (open)
$1$-neighborhood of vertex $u$. In other words, a top vertex $x$ is not
unique if there are further squares with top vertex $x$ spanned by the
edges $e$ or $f$ together with a third distinct edge $g$.
\tk{Note that the existence of a unique top vertex $x$ does not imply
	that $e$ and $f$
span a unique square, as there might be another $vuyw$ with a possible
unique top vertex $y$. Thus, $e$ and $f$ span
unique square $vuxw$ only if $|N(u) \cap N(w)| =2$ holds.}
%, \w{or if there is
%a common neighbor of $u$ and $w$ that is different from $v$ and $x$}.
The \emph{degree}
$\deg(u):=|N(u)|$ of a vertex $u$ is the number of edges that
contain $u$. The maximum degree of a graph is denoted by $\Delta$ and
a path on $n$ vertices by $P_n$.

We now \w{recall the Breadth-First Search (BFS)} ordering of the vertices $v_0,\ldots,v_{n-1}$ of a
graph: select an arbitrary, but
fixed vertex $v_0\in V(G)$, called the {\it root}, and create a sorted list
of vertices. \w{Begin} with $v_0$; append all neighbors
$v_1,\ldots,v_{\deg(v_0)}$ of $v_0$ to the list; then append all neighbors of $v_1$
that are not already in \w{the} list; \w{and} continue recursively with
$v_2,v_3,\ldots$ until all vertices of $G$ are processed.

\subsection{The Relations $\boldsymbol{\delta}$, $\boldsymbol{\sigma}$ and the Square Property.}

There are two basic relations defined on an edge set of a given graph
that play an important role in the field of Cartesian product recognition.

\begin{defn}
Two edges $e,f\in E(G)$ are in the \emph{relation $\delta_G$},
if one of the following conditions in $G$ is satisfied:
\begin{itemize}
\item[(i)]  $e$ and $f$ are adjacent and it is not the case that there is a
           unique square spanned by $e$ and $f$, and that this square is
           chordless.
\item[(ii)]  $e$ and $f$ are opposite edges of a chordless square.
\item[(iii)] $e=f$.
\end{itemize}
\label{def:delta}
\end{defn}

Clearly, this relation is  reflexive and symmetric but not
necessarily transitive. We denote its {\it transitive closures}, that is,
the smallest transitive relation containing $\delta_G$, by
$\delta_G^*$. \\

If adjacent edges $e$ and $f$ are not in relation $\delta$, that is,
\w{if} Condition (i) of  Definition \ref{def:delta} is not fulfilled, then
they span a unique \w{square, and this square is} chordless. We call such a square
just \emph{unique chordless square (spanned by $e$ and $f$)}.\\

Two edges $e$ and $f$ are in the \emph{product relation} $\sigma_G$ if they have the
same Cartesian colors with respect to the prime factorization of $G$.
The product relation $\sigma_G$ is a uniquely defined
equivalence relation on $E(G)$ that contains all information about the
prime factorization\footnote{For the properties of $\sigma$ that we will
cite or use, we refer the reader to \cite{haimkl-2011} or \cite{imkl-2000}.}.
%A square is called {\it Cartesian} if \w{any two of} its opposite edges lie in the same
%$\sigma_G$ class and incident edges lie in different $\sigma_G$ classes.
Furthermore, $\delta_G$ and $\delta_G^\ast$ are contained in $\sigma_G$.

\w{Notice that we may also use the notation $\delta(G)$ for $\delta_G$, respectively $\sigma(G)$ for $\sigma_G$.}
If there is no risk of confusion we \w{may even} write $\delta$, resp., $\sigma$.
instead of $\delta_G$, resp., $\sigma_G$.

We say an equivalence relation $\rho$ defined on the edge set of a graph
$G$ has the {\it square property} if the following three conditions hold:
\begin{itemize}
	\item[(a)] For any two edges $e = uv$ and $f = uw$ that belong to
	           different equivalence classes of $\rho$ there exists a unique
	           vertex $x \neq u$ of $G$ that is adjacent to $v$ and
	           $w$. %\vspace{-0.1in}
   \item[(b)] The square $uvxw$ is chordless.
	\item[(c)] The opposite edges of any chordless square belong to the same equivalence class of $\rho$.
\end{itemize}

From the definition of $\delta$ it easily follows that $\delta$ is a
refinement of any such $\rho$. It also implies that $\delta^\ast$, and thus
also $\sigma$, have the square property. This property is of fundamental
importance, both for the Cartesian and the quasi Cartesian product.
We note in passing that $\sigma$ is the convex hull of
$\delta^\ast$, see \cite{imze-1994}.

%%%%%%%%%%%%%%%%%%%%%%%%%%%%%%%%%%%%%%%%%%%%%%%%%%%%%%%%%%%%%%%%%%%%%%%%%%%%%%%%%%%%%%%%%%%%%%%%%%%%%%%%%%%%%%%%%%%%%%%%%%%%%%
%%%%%%%%%%%%%%%%%%%%%%%%%%%%%%%%%%%%%%%%%%%%%%%%%%%%%%%%%%%%%%%%%%%%%%%%%%%%%%%%%%%%%%%%%%%%%%%%%%%%%%%%%%%%%%%%%%%%%%%%%%%%%%
%%%%%%%%%%%%%%%%%%%%%%%%%%%%%%%%%%%%%%%%%%%%%%%%%%%%%   A PARTIAL STAR PRODUCT   %%%%%%%%%%%%%%%%%%%%%%%%%%%%%%%%%%%%%%%%%%%%%
%%%%%%%%%%%%%%%%%%%%%%%%%%%%%%%%%%%%%%%%%%%%%%%%%%%%%%%%%%%%%%%%%%%%%%%%%%%%%%%%%%%%%%%%%%%%%%%%%%%%%%%%%%%%%%%%%%%%%%%%%%%%%%
%%%%%%%%%%%%%%%%%%%%%%%%%%%%%%%%%%%%%%%%%%%%%%%%%%%%%%%%%%%%%%%%%%%%%%%%%%%%%%%%%%%%%%%%%%%%%%%%%%%%%%%%%%%%%%%%%%%%%%%%%%%%%%

\subsection{The Partial Star Product}
\label{sec:psp}
This section is concerned with the {\it partial star product}, which plays
a decisive role in the local approach. As it was introduced in
\cite{HeImKu-2013}, we will only \w{define it here,} list some of its most basic properties\w{,}
and refer to \cite{HeImKu-2013} for details.

Let $G=(V,E)$ be a given graph
and $E_v$ the set of all edges incident to some vertex $v \in V$. We
define the local relation $\R_v$ as follows: $$\R_v = ((E_v\times E) \cup
(E\times E_v)) \cap \delta_G \subseteq \delta(\la N_2^G[v]\ra).$$ In other
words, $\R_v$ is the subset of $\delta_G$ that contains all pairs $(e,f)\in
\delta_G$, where at least one of the edges $e$ and $f$ is incident to $v$.
\w{Clearly} $\R^*_v$, which is not necessarily a subset of
$\delta$, is contained in $\delta^*$, \w{see} \cite{HeImKu-2013}.

Let $S_v$ be a subgraph of $G$ that contains all edges incident to $v$ and
all squares spanned by edges $e, e'\in E_v$ where $e$ and $e'$ are not in
relation $\R_v^*$. Then $S_v$ is called \emph{partial star product} (PSP
for short). To be more precise:

\begin{defn}[Partial Star Product (PSP)]
Let $F_v\subseteq E\setminus E_v$
be the set of edges which are opposite edges of (chordless)
squares spanned by $e,e'\in E_v$ that are in different
$\R^*_v$ classes, that is, $(e,e') \not\in \R^*_v$.

Then the \emph{partial star product} is the subgraph
$S_v \subseteq G$ with edge set $E'= E_v\cup F_v$ and vertex set $\cup_{e\in E'}
e$. We call $v$ the \emph{center} of $S_v$, $E_v$ the set of \emph{primal
edges}, $F_v$ the set of \emph{non-primal edges}, and the vertices adjacent
to $v$ \emph{primal vertices} of $S_v$.
	\label{def:starproduct}
\end{defn}

As shown in \cite{HeImKu-2013}, a partial star product $S_v$ is always an
isometric subgraph or even isomorphic to a Cartesian product graph $H$, where
the factors of $H$ are so-called stars $K_{1,n}$.
These stars can directly be determined by the respective $\R_v^*$ classes,
see \cite{HeImKu-2013}.

Now we define a {\it local coloring} of $S_v$ as the restriction of
the relation $\R^*_v$ to $S_v$:
$$\R_{|S_v}:=\R^*_{v|S_v}= \{(e,f) \in \R^*_v \mid e,f \in E(S_v)\}.$$
In other words, $\R_{|S_v}$ is the subset
of $\R^*_v$ that contains all pairs of edges $(e,f)\in \R^*_v$ where both
$e$ and $f$ are in $S_v$ and edges obtain the same local color whenever
they are in the same equivalence class of $\R_{|S_v}$.
As an example consider the PSP $S_v$ in Figure \ref{fig:agp}.
The relation $\R_{|S_v}$ has three equivalence classes
(highlighted by thick, dashed and double-lined edges).
Note, $\delta^*$ just contains one equivalence class. Hence,
$\R_{|S_v} \neq \w{\delta(S_v)^*}$.

For a given subset $W\subseteq V$ we set
$$\R_{|S_v}(W) = \cup_{v\in W} \R_{|S_v}\,.$$ The transitive closure of
$\R_{|S_v}(W)$ is then called the {\it global coloring} with respect to
$W$. As shown in \cite{HeImKu-2013}, we have the following theorem.

\begin{thm}
	Let $G=(V,E)$ be a given graph and
	$\R_{|S_v}(V) = \cup_{v_\in V} \R_{|S_v}$.
	Then $$\R_{|S_v}(V)^* = \delta(G)^*.$$
\label{thm:union_equals_delta}
\end{thm}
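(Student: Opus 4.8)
The plan is to prove the two inclusions $\R_{|S_v}(V)^* \subseteq \delta(G)^*$ and $\delta(G)^* \subseteq \R_{|S_v}(V)^*$ separately, since both are equivalence relations on $E(G)$ and it suffices to compare the pairs that generate them.

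\medskip

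\noindent\textbf{The easy inclusion.} First I would show $\R_{|S_v}(V)^* \subseteq \delta(G)^*$. By definition $\R_{|S_v} \subseteq \R^*_v$, and the excerpt already records (citing \cite{HeImKu-2013}) that $\R^*_v \subseteq \delta(G)^*$ for every vertex $v$. Hence $\R_{|S_v}(V) = \bigcup_{v\in V}\R_{|S_v} \subseteq \delta(G)^*$, and since $\delta(G)^*$ is transitive, taking the transitive closure of the left side preserves the inclusion: $\R_{|S_v}(V)^* \subseteq \delta(G)^*$. This direction is essentially bookkeeping.

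\medskip

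\noindent\textbf{The hard inclusion.} For $\delta(G)^* \subseteq \R_{|S_v}(V)^*$ it is enough to show that every generating pair of $\delta(G)$ — i.e.\ every pair $(e,f)$ with $e\,\delta_G\,f$ — already lies in $\R_{|S_v}(V)^*$, because then the transitive closure on the left is contained in the transitive relation on the right. So let $e\,\delta_G\,f$. If $e=f$ the claim is trivial. If $e$ and $f$ are adjacent, say $e=vu$ and $f=vw$ share the vertex $v$, then condition~(i) of Definition~\ref{def:delta} says they do \emph{not} span a unique chordless square; consequently $e,f\in E_v$ and $(e,f)\in \R_v$ by the very definition of $\R_v = ((E_v\times E)\cup(E\times E_v))\cap\delta_G$. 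Moreover $e,f\in E(S_v)$ since $S_v$ contains all primal edges, so $(e,f)\in \R_{|S_v}\subseteq \R_{|S_v}(V)$, and we are done. The genuinely delicate case is (ii): $e$ and $f$ are opposite edges of a chordless square $vuxw$, where we may take $e=vu$ and $f=wx$ (so $vw$ and $ux$ are the other pair of opposite sides). Here there is no reason for $e,f$ to be incident to a common vertex, so $(e,f)$ need not lie in any single $\R_v$; one must produce a $\delta_G^*$-style chain of pairs, each of which is captured by some local relation $\R_{|S_u}$.

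\medskip

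\noindent\textbf{Handling case (ii).} I would center the argument at $v$ and look at the two incident edges $e=vu$ and $g=vw$, which span the chordless square $vuxw$. There are two subcases according to whether $(e,g)\in\R^*_v$ or not. If $(e,g)\notin\R^*_v$, then by Definition~\ref{def:starproduct} the opposite edge $f=wx$ of this square lies in the non-primal set $F_v$, so $f\in E(S_v)$; since the PSP $S_v$ is (isometric in, or isomorphic to) a Cartesian product of stars and its coloring $\R_{|S_v}$ has the square property — opposite edges of a chordless square in $S_v$ get the same color — we get $(e,f)\in\R_{|S_v}\subseteq\R_{|S_v}(V)$ directly. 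If instead $(e,g)\in\R^*_v$, then $e$ and $g$ lie in the same $\R^*_v$-class, and I would use this together with the square property of $\delta^*$ (stated in the excerpt) and the structure of $S_v$ as a product of stars: within that star the edges incident to $v$ in the common class form one factor $K_{1,n}$, and the square $vuxw$ projects to an edge of another factor; chasing opposite edges around the product gives that $f=wx$ is $\R_{|S_v}$-related to $g=vw$, which is $\R^*_v$-related to $e$. The point to nail down carefully is that $\R^*_v$ restricted to the edges actually \emph{present} in $S_v$ coincides with $\R_{|S_v}$, so that an $\R^*_v$-chain among edges of $S_v$ is already an $\R_{|S_v}$-chain — this is where one must be most careful, since $\R^*_v$ is taken in the larger graph $\langle N_2^G[v]\rangle$ while $\R_{|S_v}$ lives on $S_v$.

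\medskip

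\noindent\textbf{Expected obstacle.} The main obstacle is case~(ii) when $(e,g)\in\R^*_v$: one must show that the opposite edge $f$ is reached by a chain that stays inside PSPs, rather than merely inside $\delta_G^*$. The cleanest route is probably to invoke the result (already cited from \cite{HeImKu-2013}) that $S_v$ is an isometric subgraph isomorphic to a Cartesian product of stars with the stars determined by the $\R^*_v$-classes, deduce that $\R_{|S_v}$ has the square property on $S_v$, and then argue purely combinatorially within that product structure. A secondary subtlety is verifying that when $e,g$ span a chordless square in $G$ they indeed span one in $S_v$ — i.e.\ that the top vertex $x$ and the edge $f$ belong to $S_v$ — which follows from unwinding Definition~\ref{def:starproduct} in the two subcases above.
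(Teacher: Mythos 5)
First, a remark on the comparison: the paper does not prove Theorem \ref{thm:union_equals_delta} at all --- it is imported verbatim from \cite{HeImKu-2013} (``As shown in \cite{HeImKu-2013}, we have the following theorem''). So your attempt has to be judged on its own merits. Your easy inclusion is fine, as is case (i) (both edges lie in $E_v\subseteq E(S_v)$ and the pair is in $\delta_G$, hence in $\R_v\subseteq\R^*_v$, hence in $\R_{|S_v}$ by definition), and so is subcase A of case (ii): if $(e,g)\notin\R^*_v$ then Lemma \ref{lem:PSP1} makes $vuxw$ \emph{the} unique chordless square spanned by $e$ and $g$, so $f\in F_v$, and then $(e,f)\in\delta_G\cap(E_v\times E)\subseteq\R_v$ with both edges in $E(S_v)$ gives $(e,f)\in\R_{|S_v}$ directly --- you do not even need the square property of the PSP. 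Also, the point you flag as delicate (that $\R^*_v$ restricted to $E(S_v)$ equals $\R_{|S_v}$) is not delicate: it is literally Definition of $\R_{|S_v}:=\R^*_{v|S_v}$.

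The genuine gap is subcase B. When $(e,g)\in\R^*_v$, the edges $e=vu$ and $g=vw$ lie in the \emph{same} $\R^*_v$-class, so the square $vuxw$ is precisely one of the squares that the PSP construction \emph{discards}: $F_v$ only collects opposite edges of squares spanned by primal edges in \emph{different} classes. Hence $x$ need not be a vertex of $S_v$ and $f=wx$ need not be an edge of $S_v$, and your claim that ``chasing opposite edges around the product gives that $f=wx$ is $\R_{|S_v}$-related to $g=vw$'' has no basis --- there is nothing to chase, because the square does not live in $S_v$. Staying at the single center $v$ cannot work here; this case is exactly the reason the theorem takes a union over all vertices. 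The repair is to change center: consider $S_w$, where $f=wx$ and $g=wv$ are both primal. If $(f,g)\notin\R^*_w$, then by Lemma \ref{lem:PSP1} the square $vuxw$ is the unique chordless square they span, so $e\in F_w$ and $(e,f)\in\R_{|S_w}$. If $(f,g)\in\R^*_w$, then $(f,g)\in\R_{|S_w}$ (both primal at $w$), while $(e,g)\in\R_{|S_v}$ (both primal at $v$, and in $\R^*_v$ by hypothesis), so $(e,f)\in\bigl(\R_{|S_v}\cup\R_{|S_w}\bigr)^*\subseteq\R_{|S_v}(V)^*$. With that substitution your outline becomes a complete proof.
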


For later reference and for the design of the recognition algorithm
we \w{list} the following three lemmas about \w{relevant} properties of the PSP.
%For the proofs we refer to \cite{HeImKu-2013} again.

\begin{lem}[\cite{HeImKu-2013}]
\label{lem:PSP1}
Let G=(V,E) be a given graph and $S_v$ be a PSP  \w{of} an arbitrary vertex $v\in V$.
If  $e, f \in E_v$ are primal edges that are not in relation $\R_v^*$, then
$e$ and $f$ span a unique chordless square with a unique top vertex in $G$.

Conversely, suppose that $x$ is a non-primal vertex of $S_v$. Then
there is a unique chordless square in $S_v$ that contains $x$,
and that is spanned by edges $e, f \in E_v$ with $(e,f)\not\in \R^*_v$.
\end{lem}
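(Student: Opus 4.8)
The plan is to prove the two halves of Lemma~\ref{lem:PSP1} separately, relying on the square property of $\delta^*$ (hence of $\R_v^*$ as its restriction near $v$) and on the definition of $\delta$ via unique chordless squares. For the first statement, suppose $e=vu$ and $f=vw$ are primal edges with $(e,f)\notin\R_v^*$. Since $\R_v^*\subseteq\delta^*$ and $\delta^*$ has the square property, and since $e,f$ lie in distinct $\delta^*$-classes (otherwise, being incident to $v$, they would already be related in $\R_v^*$), part~(a) of the square property yields a \emph{unique} vertex $x\neq v$ adjacent to both $u$ and $w$, and part~(b) says the square $vuxw$ is chordless. It remains to check that the top vertex $x$ is \emph{unique} in the sense of the paper, i.e.\ $|N(x)\cap N(v)|=2$: if there were a third common neighbor $y$ of $x$ and $v$, then one of the edges $e$ or $f$, say $e=vu$, together with $vy$ would span a chordless square with top vertex $x$ (using the edges $ux$ and $yx$), contradicting either the uniqueness of the square guaranteed by the square property or producing a chord; I would spell out this short case analysis to rule out the extra neighbor.

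For the converse, let $x$ be a non-primal vertex of $S_v$. By Definition~\ref{def:starproduct}, $x$ lies on some non-primal edge in $F_v$, i.e.\ $x$ is the opposite endpoint of a chordless square spanned by edges $e,e'\in E_v$ with $(e,e')\notin\R_v^*$; concretely $x$ is the top vertex of a square $vuxw$ with $e=vu$, $e'=vw$. Existence of such a square is therefore immediate from membership of $x$ in $S_v$. For uniqueness within $S_v$, I would argue that any chordless square of $S_v$ containing $x$ must have its fourth vertex equal to $v$: since $x$ is non-primal it is not adjacent to $v$, so $x$ is the vertex diagonally opposite $v$ in the square, and the two neighbors of $x$ in that square are primal vertices $u,w$ adjacent to $v$. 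By the first part of the lemma (or directly by the square property applied to $vu,vw$) the square on $v,u,w$ with top vertex $x$ is the unique chordless one, so all such squares coincide. The mild subtlety here is to confirm that a chordless square in $S_v$ through $x$ cannot avoid $v$ altogether — this follows because every vertex of $S_v$ other than $v$ is either primal (a neighbor of $v$) or non-primal, and two non-primal vertices need not be adjacent, while a square through $x$ using only primal vertices as its other three corners would force $v$ to have a $4$-cycle among its neighbors, which the chordless-square hypothesis on the spanning edges forbids; again a brief case check closes this.

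The main obstacle, and the step I would treat most carefully, is the uniqueness of the top vertex $x$ in the first statement — that is, translating ``unique square via the square property'' into ``$|N(x)\cap N(v)|=2$''. The square property directly gives a unique common neighbor $x$ of $u$ and $w$, but the paper's notion of a \emph{unique} top vertex is a statement about $N(x)\cap N(v)$, a priori a different condition (as the remark by Kupka in the Preliminaries explicitly warns, unique top vertex and unique square are not the same). Bridging this gap requires showing that an additional vertex in $N(x)\cap N(v)$ would create a second chordless square on one of the pairs $\{vu,vy\}$ or $\{vw,vy\}$ (or a chord), contradicting the square property; handling the possibility that such a configuration has a chord rather than being a genuine second square is the fiddly part. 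Everything else reduces to unwinding Definition~\ref{def:starproduct} and invoking properties of $\delta^*$ already established in the excerpt. $\square$
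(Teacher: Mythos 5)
The paper itself does not prove Lemma~\ref{lem:PSP1}; it is quoted from \cite{HeImKu-2013}, so there is no in-paper proof to compare against. Judged on its own terms, your argument has a genuine flaw at its central step. You deduce that $e$ and $f$ lie in distinct $\delta^\ast$-classes from $(e,f)\notin\R_v^\ast$, justifying this by ``otherwise, being incident to $v$, they would already be related in $\R_v^\ast$''. That implication is false: a $\delta$-chain witnessing $(e,f)\in\delta^\ast$ need not contain any edge incident to $v$, so two edges of $E_v$ can lie in a single $\delta^\ast$-class yet in different $\R_v^\ast$-classes. The paper's own example in Figure~\ref{fig:agp} is a direct counterexample --- there $\delta^\ast$ has one class while $\R_{|S_v}$ has three. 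Consequently the appeal to the square property of $\delta^\ast$ does not get off the ground, and the parenthetical claim that $\R_v^\ast$ inherits the square property ``as its restriction near $v$'' is also unjustified: the square property does not pass to refinements, since a finer relation has strictly more cross-class pairs that must satisfy condition~(a); indeed the first half of the lemma essentially \emph{is} a square property for $\R_v^\ast$ on $E_v$, so assuming it is circular. The correct route to the unique chordless square is far more elementary: $(e,f)\notin\R_v^\ast$ gives $(e,f)\notin\R_v$, and since $e,f\in E_v$ the definition of $\R_v$ forces $(e,f)\notin\delta_G$; negating condition~(i) of Definition~\ref{def:delta} yields directly that $e$ and $f$ span a unique square and that this square is chordless.

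Second, the step you yourself flag as delicate --- that the top vertex $x$ satisfies $|N(x)\cap N(v)|=2$ --- is not actually closed by your sketch. A third common neighbour $y$ of $x$ and $v$ produces a square $vuxy$ on the pair $(vu,vy)$; this does not contradict the uniqueness of the common neighbour of $u$ and $w$, which is the only uniqueness available to you. What is needed is to convert the extra neighbour $y$ into an $\R_v$-chain joining $e$ and $f$: either the square $vuxy$ (resp.\ $vwxy$) fails to be a unique chordless square for the pair $(vu,vy)$ (resp.\ $(vw,vy)$), which puts that pair into $\delta_G\cap(E_v\times E_v)\subseteq\R_v$, or it is one, in which case clause~(ii) of Definition~\ref{def:delta} relates $vy$ to the opposite edge $ux$ (resp.\ $wx$), which is in turn opposite to $f$ (resp.\ $e$) in the chordless square $vuxw$; chasing the resulting cases yields $(e,f)\in\R_v^\ast$, the desired contradiction. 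Your converse direction (existence from Definition~\ref{def:starproduct}, uniqueness via the first half) is structurally sound, but the assertion that every chordless square of $S_v$ through $x$ must pass through $v$ still requires the observation that all neighbours of a non-primal vertex within $S_v$ are primal, together with a short argument excluding a square whose fourth corner is a second non-primal vertex.
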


\begin{lem}[\cite{HeImKu-2013}]
Let G=(V,E) be a given graph and $f \in F_v$ be a non-primal edge of a
PSP $S_v$ \w{of} an arbitrary vertex $v\in V$. Then $f$ is opposite to exactly one
primal edge $e\in E_v$ in $S_v$, and $(e,f)\in \R_{|S_v}$.
\label{lem:PSP2}
\end{lem}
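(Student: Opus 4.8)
The plan is to break the statement into three parts and dispatch each in turn: (1) $f$ is opposite to \emph{at least one} primal edge in $S_v$; (2) it is opposite to \emph{at most one}; (3) that unique edge $e$ satisfies $(e,f)\in\R_{|S_v}$. I expect (1) and (3) to be routine unwinding of Definitions~\ref{def:starproduct}, \ref{def:delta} and of the relations $\R_v$, $\R_{|S_v}$; the actual content sits in (2), which I would obtain from the converse half of Lemma~\ref{lem:PSP1}.

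For existence I would just read off Definition~\ref{def:starproduct}: since $f\in F_v$, there is a chordless square $Q=vabc$ spanned by primal edges $e_1=va$, $e_2=vc\in E_v$ with $(e_1,e_2)\notin\R_v^*$, and $f$ is one of the two $v$-avoiding edges $ab$, $bc$ of $Q$; say $f=ab$, which in the $4$-cycle $Q$ is precisely the edge opposite $e_2$. All of $va,vc\in E_v$ and $ab,bc\in F_v$ lie in $E(S_v)$, so $Q$ is a chordless square \emph{of} $S_v$ exhibiting $f$ as opposite to the primal edge $e:=e_2$. Before moving on I would record the one structural fact the uniqueness argument needs: in the chordless square $Q$ the vertices $v$ and $b$ form a diagonal pair, so $b\not\sim v$ while $a\sim v$. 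Hence $f=ab$ has a \emph{well-defined} non-primal endpoint, namely $b$, which does not depend on the square used to witness $f\in F_v$, and $b$ is a non-primal vertex of $S_v$.

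For uniqueness I would suppose $f$ is also opposite to a primal edge $e'=vy$ ($y\sim v$) inside some chordless square $Q'$ of $S_v$. Since $b\not\sim v$, the only way to complete $\{f=ab,\ e'=vy\}$ to a chordless $4$-cycle on $\{a,b,v,y\}$ is $Q'=v\sim a\sim b\sim y\sim v$; in particular $Q'$ again passes through the non-primal vertex $b$. Thus $Q$ and $Q'$ are both chordless squares of $S_v$ containing $b$, and the converse part of Lemma~\ref{lem:PSP1} forces $Q=Q'$, hence $e'=e$. Finally, for $(e,f)\in\R_{|S_v}$: $e$ and $f$ are opposite edges of the chordless square $Q$, so $(e,f)\in\delta_G$ by Definition~\ref{def:delta}(ii); since $e\in E_v$ this places $(e,f)$ in $((E_v\times E)\cup(E\times E_v))\cap\delta_G=\R_v\subseteq\R_v^*$, and with $e\in E_v\subseteq E(S_v)$ and $f\in F_v\subseteq E(S_v)$ we conclude $(e,f)\in\R_{|S_v}$. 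The only step that needs care — the ``hard part'', such as it is — is making the non-primal endpoint of $f$ canonical before appealing to Lemma~\ref{lem:PSP1}; once that is in place the converse direction of that lemma finishes the argument immediately.
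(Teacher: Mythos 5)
The paper offers no proof of this lemma to compare against: it is imported from \cite{HeImKu-2013} and listed here only ``for later reference''. Judged on its own terms, your argument is correct. Existence and the membership $(e,f)\in\R_{|S_v}$ are, as you say, direct from Definition~\ref{def:starproduct} and Definition~\ref{def:delta}(ii) together with $e\in E_v\subseteq E(S_v)$, and your key observation --- that chordlessness of the witnessing square forces $f$ to have exactly one non-primal endpoint $b$, which is canonical because primality of a vertex (adjacency to $v$) is intrinsic and not square-dependent --- is exactly the right pivot to the converse half of Lemma~\ref{lem:PSP1}. The one point to watch is the scope of the uniqueness clause there: you invoke it as ``$b$ lies in a unique chordless square of $S_v$'', whereas the statement can also be parsed as ``a unique chordless square of $S_v$ that contains $b$ \emph{and} is spanned by primal edges in different $\R_v^*$-classes''; under that narrower reading your $Q'$ is not yet known to qualify. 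The gap, if one insists on it, closes with the device you already set up: the fourth edge $by$ of $Q'$ is in $F_v$ (since $b\not\sim v$), hence is an opposite edge of some chordless square $Q''$ spanned by primal edges in different $\R_v^*$-classes, and since $y$ is primal while $b$ is not, $b$ must be the top vertex of $Q''$; the narrow reading of Lemma~\ref{lem:PSP1} then gives $Q''=Q$, forcing $y\in\{a,c\}$, hence $y=c$ and $e'=e$. With that one-line insurance the proof is complete.
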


\begin{lem}[\cite{HeImKu-2013}]
Let G=(V,E) be a given graph and $W\subseteq V$ such that $\la W \ra$ is connected.
Then each vertex $x\in W$ meets every equivalence class
of $\R_{|S_v}(W)^*$ in $\cup_{v\in W} S_v$.
%, i.e.,
%for each equivalence class $\varphi$ of $\R_{|S_v}(W)^*$ and
%for each vertex $x\in W$ there is an edge $(x,y)\in \varphi$
%with $(x,y)\in E(\cup_{v\in W} S_v)$.
\label{lem:vMeetsEveryClass}
\end{lem}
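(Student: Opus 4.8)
The plan is to induct on the construction of the transitive closure $\R_{|S_v}(W)^*$, reducing everything to the single–PSP case and then to the behaviour of $\R_{|S_v}$ inside one $S_v$. First I would settle the base case: fix $x\in W$ and consider its own partial star product $S_x$. By Lemma \ref{lem:PSP1} and Lemma \ref{lem:PSP2}, every equivalence class of $\R_{|S_x}$ is represented by a primal edge of $S_x$, i.e.\ an edge of $E_x$ incident to $x$ (a non-primal edge $f$ is always $\R_{|S_x}$-equivalent to the unique primal edge it is opposite to). Since the classes of $\R_{|S_x}$ are exactly the "star factors" $K_{1,n}$ of the Cartesian-product structure of $S_x$, each such class contains at least one edge at the center $x$. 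Hence $x$ meets every class of $\R_{|S_x}$, which is the statement for $W=\{x\}$.

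Next I would handle how classes merge when we pass from the individual relations $\R_{|S_v}$, $v\in W$, to the union $\R_{|S_v}(W)$ and then to its transitive closure. The key point is that two edges $e,f$ lie in the same class of $\R_{|S_v}(W)^*$ iff there is a finite chain $e=e_0,e_1,\dots,e_k=f$ of edges of $\cup_{v\in W}S_v$ such that each consecutive pair $e_{i-1},e_i$ lies together in some $S_{v_i}$ with $(e_{i-1},e_i)\in\R_{|S_{v_i}}$. Using connectedness of $\la W\ra$, I would show that the centers $v_1,\dots,v_k$ occurring along such a chain can be connected through $W$, and that along the way the classes of the local relations $\R_{|S_{v_i}}$ get identified consistently. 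Concretely: if $v,w\in W$ are adjacent, then the edge $vw$ lies in both $S_v$ and $S_w$, so any $\R_{|S_v}(W)^*$ class $C$ meeting $S_v$ is linked, via the class containing $vw$, to a class meeting $S_w$; iterating along a path in $\la W\ra$ from $x$ to an arbitrary $v\in W$ shows that every class of $\R_{|S_v}(W)^*$ that meets $\cup_{v\in W}S_v$ already meets $S_x$ in a $\R_{|S_x}$-subclass, and by the base case that subclass contains an edge at $x$.

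The main obstacle, and the step that needs the most care, is the compatibility of the local colorings under the union: I must rule out the possibility that some class of $\R_{|S_v}(W)^*$ restricted to a single $S_v$ fails to be a union of $\R_{|S_v}$-classes, or — more subtly — that taking the transitive closure creates a class that lives entirely on non-primal edges of every $S_v$ it touches. Lemma \ref{lem:PSP2} is what prevents the latter inside one PSP (every non-primal edge is tied to a primal one), and I expect the former to follow from the fact (cited from \cite{HeImKu-2013}) that $\R_v^\ast\subseteq\delta^\ast$ together with the square property of $\delta^\ast$: any chordless square used to merge two edges in $S_v$ has its opposite edges in the same $\delta^\ast$-class, and the local class structure of $S_v$ is exactly $\delta^\ast$ restricted to that Cartesian-product subgraph. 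Once this coherence is in hand, the induction on chain length closes the argument: at each step we stay within a class that already meets $S_x$, hence contains an edge incident to $x$.
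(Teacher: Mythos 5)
The paper itself does not prove this lemma; it only quotes it from \cite{HeImKu-2013}, so your attempt has to stand on its own. Your base case is correct: by Lemma \ref{lem:PSP2} every non-primal edge of $S_x$ is $\R_{|S_x}$-equivalent to the unique primal edge it is opposite to, so every class of $\R_{|S_x}$ contains an edge incident to the center $x$. Also, the ``main obstacle'' you single out is not actually an obstacle: since $\R_{|S_v}\subseteq\R_{|S_v}(W)^*$ and $\R_{|S_v}$ is an equivalence relation, every class of the closure restricted to $E(S_v)$ is automatically a union of $\R_{|S_v}$-classes, and by Lemma \ref{lem:PSP2} each such class contains a primal edge of $S_v$; no appeal to the square property of $\delta^\ast$ or to any coherence of local colorings is needed there.

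The genuine gap is in the propagation step, which is the heart of the lemma. You claim that for adjacent $v,w\in W$ a class $C$ meeting $S_v$ is ``linked, via the class containing $vw$, to a class meeting $S_w$.'' That does not prove what is needed: you must show that the \emph{same} class $C$ contains an edge incident to $w$, not that \emph{some} class meets $S_w$ (which is trivial), and the class of the edge $vw$ is of no help when $vw\notin C$. The missing mechanism is the square argument inside $S_v$: by the previous paragraph $C$ contains a primal edge $vu$ of $S_v$. If $vw\in C$ you are done. Otherwise $vu$ and $vw$ lie in different $\R_v^*$-classes, so by Lemma \ref{lem:PSP1} they span a unique chordless square $vuyw$; the edge $wy$ opposite to $vu$ lies in $F_v\subseteq E(S_v)$ and satisfies $(vu,wy)\in\R_{|S_v}$ by Lemma \ref{lem:PSP2}, hence $wy\in C$ and is incident to $w$. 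Since every edge incident to $w$ belongs to $E_w$ and is therefore primal in $S_w$, the step can be iterated along a path from $v$ to $x$ in $\la W\ra$, which exists by connectedness. With this replacement your outline closes; as written, the induction never moves past the first vertex.
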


%Two PSP's are adjacent if their centers are adjacent. Any PSP $S_c$ is associated with {\it a local color function} $lc_{S_c}:E(S_c)\rightarrow \mathbb{N}$ that assigns $c$-local colors to edges of $S_c$.
%Two primal edges have same local colors in PSP $S_c$ if they do not span any common square in $S_c$. Every non-primal edge spans exactly one square in PSP and has the same local color as the opposite primal edge.
%Two edges $e,f\in S_c$ have same $c$-local colors if and only if they are in relation $\delta_c^*$.
%Note that any non-primal edge is opposite to exactly one primal edge and has identical local color.

\section{Quasi Cartesian Products}
\label{sec:twist}

Given a Cartesian product $G = A \square B$ of two connected, prime graphs
$A$ and $B$\w{,} one can recover the factors $A$ and $B$ as follows: the product
relation $\sigma$ has two equivalence classes, say $E_1$ and $E_2$, and the
connected components of the graph $(V(G), E_1)$ are all isomorphic copies
of the factor $A$, or of the factor $B$, see Figure \ref{fig:cp}. This property
naturally extends to products of more than two prime factors.

We already observed that $\delta$ is finer than any equivalence relation
$\rho$ that satisfies the square property. Hence the equivalence classes of
$\rho$ are unions of $\delta^\ast$-classes. This also holds for $\sigma$.
It is important to \w{keep in mind that} $\sigma$  can \mh{be trivial, that
is, it} consists of a single equivalence
class \w{even when $\delta^\ast$ has} more than one equivalence class.

We call all graphs $G$ with a non-trivial equivalence relation $\rho$ that is
defined on $E(G)$ and satisfies the square property \emph{quasi
(Cartesian) products}. Since $\delta^*\subseteq \rho$ for every such
relation $\rho$, it follows that $\delta^*$ must have at least two
equivalence classes for any quasi product. By Theorem
\ref{thm:union_equals_delta} \w{we have} $\R_{|S_v}(V(G))^* =\delta^*$.
In other words, quasi products can be defined as graphs where the PSP's of
all vertices are non-trivial, that is, none of the PSP's is a star $K_{1,n}$, and in
addition, where the union over all \w{$\R_{|S_v}$ yields} a non-trivial $\delta^*$.

Consider the equivalence classes of the relation $\delta^\ast$ of the
graph $G$ of Figure \ref{fig:twist2}.
It has two equivalence classes, and locally looks
like a Cartesian product, but is actually reminiscent of a M\"obius band.
Notice that the graph $G$ in Figure \ref{fig:twist2} is prime with respect to
Cartesian multiplication, although $\delta^*$ has two equivalence classes:
all components of the first class are paths of length $2$, and there are
two components of the other $\delta^\ast$-class, which do not
have the same size. Locally this graph looks
either like $P_3 \square P_3$ or $P_2 \square P_3$.

\begin{figure}[tbp]
\centering
  \subfigure[The Cartesian product $G=P_3\Box C_4$.]{
    \label{fig:cp}			%178 483 366 660
		\includegraphics[bb= 174 483 417 660, width=0.4\textwidth]{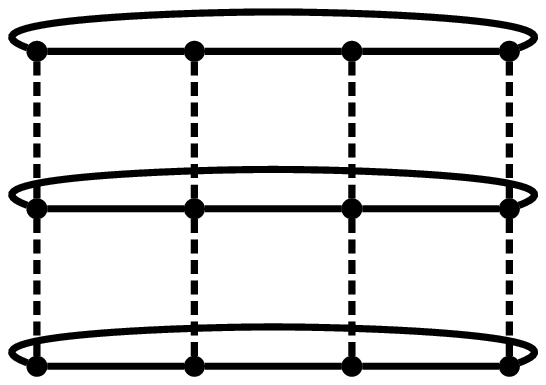}
  }$\qquad\qquad$
  \subfigure[A quasi Cartesian product, which is also a graph bundle.]{
    \label{fig:twist2} %178 483 366 660
    \includegraphics[bb= 174 483 417 660, width=0.4\textwidth]{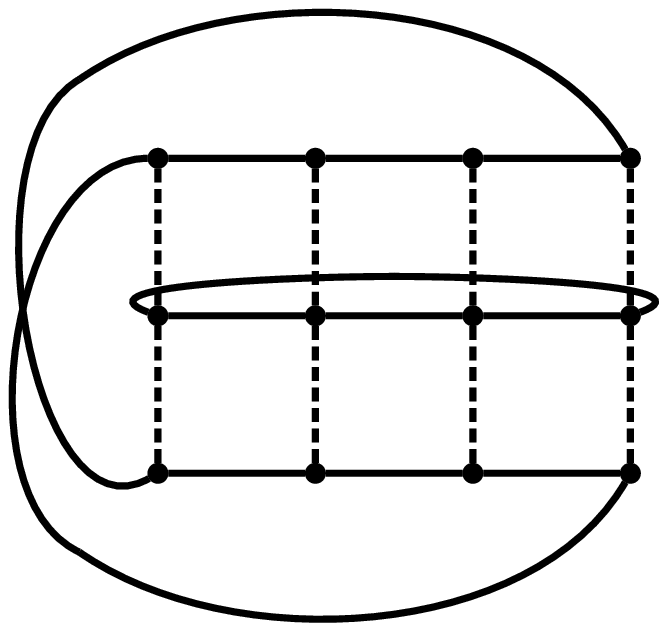}
  }
  \subfigure[A quasi Cartesian product, which is not a graph bundle.]{
    \label{fig:twistNotBundle} %174 455 417 688
    \includegraphics[bb=174 455 417 688, width=0.4\textwidth]{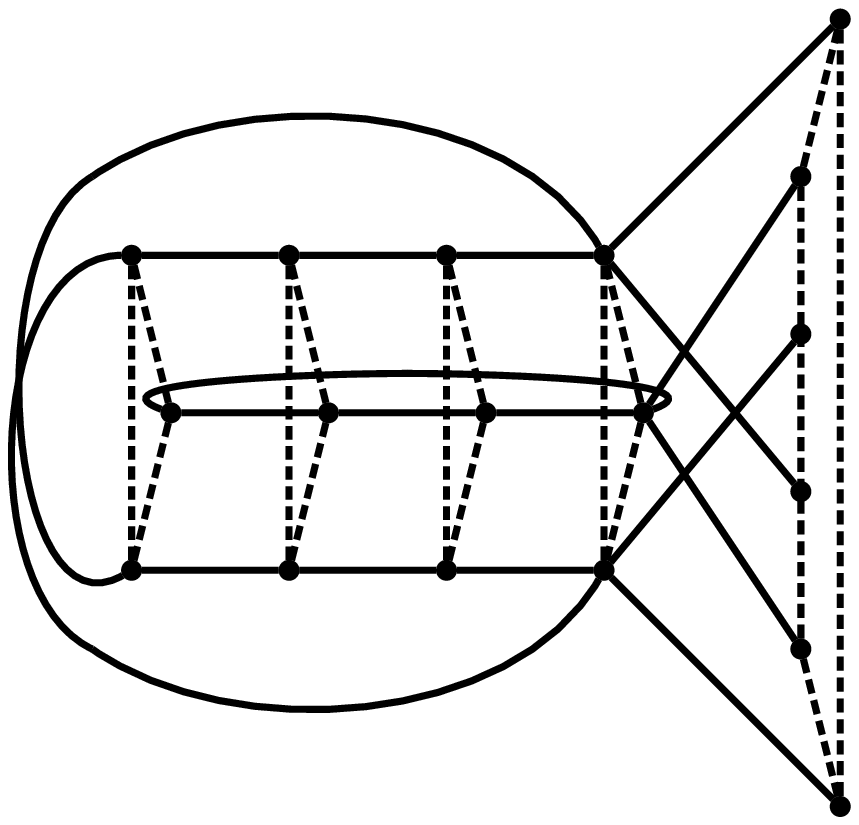}
  }
  $\qquad\qquad$
  \subfigure[The approximate product and PSP $S_v$, which is neither a quasi product nor a graph bundle.]{
    \label{fig:agp}   %65 443 248 626
		  \includegraphics[bb=  55 443 298 626, scale=0.6]{./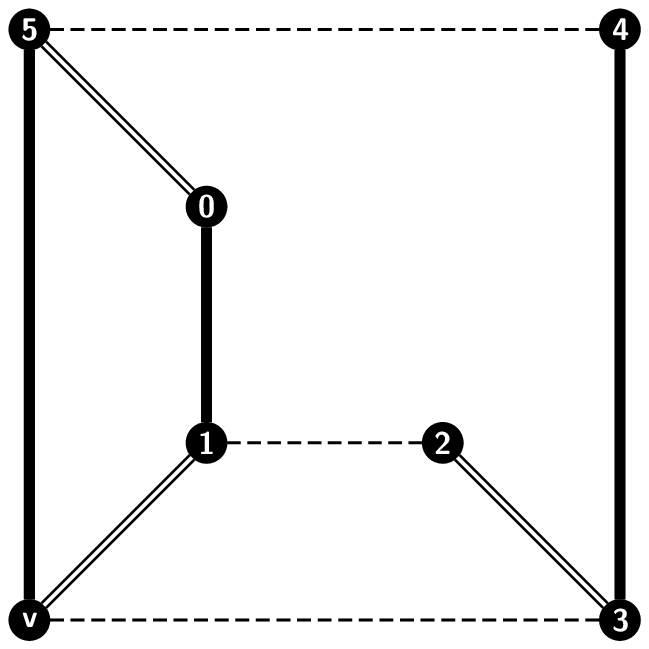}
  }
  \caption{Shown are several quasi Cartesian products, graph bundles and approximate products.}
  \label{fig:Exmpl1}
\end{figure}

In fact, the graph in Figure \ref{fig:twist2} is a so-called Cartesian
graph bundle \cite{impize-1997}, where \emph{Cartesian graph bundles} are defined as follows:
Let $B$ and $F$ be graphs. A graph $G$ is a (Cartesian) graph bundle with fiber $F$ over the base $B$
if there exists a weak homomorphism\footnote{A weak
homomorphism maps edges into edges or single vertices.} $p : G \rightarrow
B$ such that
\begin{itemize}
\item[(i)] for any $u \in V (B)$, the  subgraph (induced by) $p^{-1}(u)$ is isomorphic to $F$, and
\item[(ii)] for any $e \in E(B)$, the subgraph $p^{-1}(e)$ is isomorphic to $K_2 \square F.$
\end{itemize}

The graph of Figure \ref{fig:twistNotBundle} shows that not all quasi
Cartesian products are graph bundles. On the other hand, not every graph
bundle has to be a quasi product. The standard example is
the  complete bipartite graph $K_{2,3}$.
It is a graph bundle with base $K_3$ and fiber $K_2$, but has
only one $\delta^\ast$-class.

Note, in \cite{HeImKu-2013} we  \w{considered} "approximate products"
which  \w{were} first introduced in \cite{hiks-2008, hellmuth-2011}.
As approximate products are all graphs that have a
\w{(small) edit} distance
to a non-trivial product graph, it is clear that every bundle and quasi product
can be considered as \w{an} approximate product, while the converse is not true.
\w{For example,} consider the graph in Figure \ref{fig:agp}.
Here, $\delta^*$ has only one equivalence class.
However, the relation $\R_{|S_v}$ has, in this case,
three equivalence classes (highlighted by thick, dashed and double-lined edges).
%This relation is introduced in Section \ref{sec:psp}.

%As it may not be apparent right away, let us point out that locally the
%graph of Figure \ref{fig:twistNotBundle} has to look like a Cartesian product too, see Figure
%[5].

Because of the local product-like structure of quasi Cartesian products
we are led to the following conjecture:
\begin{conj}
Quasi Cartesian products can be reconstructed in essentially the same time
from vertex-deleted subgraphs as Cartesian products.
\end{conj}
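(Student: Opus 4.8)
The plan is to imitate the weak-reconstruction procedure for Cartesian products step by step, with the product relation $\sigma$ replaced throughout by the canonical relation $\delta^*$, and to check that each step survives the passage to a quasi product.

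\emph{Locality of the coloring.} First I would prove that if $H$ is a quasi Cartesian product and $G = H\smallsetminus x$, then for every vertex $v$ with $d_H(v,x)\geq 3$ the partial star product $S_v$ computed in $G$ coincides with the one computed in $H$, and likewise for the local coloring $\R_{|S_v}$. The reason is that $\delta$, and the existence of unique chordless squares, are decided inside the $2$-neighborhood of a vertex (recall $\R_v\subseteq\delta(\la N_2^G[v]\ra)$), so only the $O(\Delta^2)$ vertices within distance $2$ of $x$ can have a corrupted PSP. Consequently the bulk of $\delta(H)^*$ is already present in $G$; since the uncorrupted region is connected and contains almost all of $H$, Lemma \ref{lem:vMeetsEveryClass} ensures that it still meets every $\delta^*$-class, so the global color classes of $H$ — and the layer decomposition they induce — can be read off from $G$ alone, exactly as in Theorem \ref{thm:union_equals_delta}.

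\emph{Filling in the deleted vertex.} Next I would recover the missing neighborhood of $x$ by using the square property of $\delta(H)^*$ in reverse. A neighbor $u$ of $x$ carries one primal edge $ux$ of each color class that ``points toward'' $x$; by Lemma \ref{lem:PSP1} the edge $ux$ together with any second edge at $u$ of a different color must close a unique chordless square, three of whose four vertices already lie in $G$. These constraints force which vertices of $G$ must be neighbors of $x$ and with which colors the new edges are labelled — this is the analogue of the fibre-completion step of Hagauer and \v{Z}erovnik, the quasi-product twist being absorbed locally because the square property still assigns each new edge a well-defined color. One then forms the candidate graph $H'$ by adding $x$ and the forced edges and tests consistency: in $H'$ every chordless square must be monochromatic on opposite edges, and every bichromatic cherry must span a unique chordless square. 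Accept $H'$ if and only if it passes these tests. For the running time: computing $\delta(G)^*$ costs $O(m\Delta)$ by the main result of this paper; the repair is confined to the $O(\Delta^2)$ vertices near $x$, and re-deriving a PSP with its coloring costs $O(\Delta^2)$ per vertex, contributing an $O(\Delta^4)$ term; propagating the corrected colors and verifying global consistency is one more $O(m\Delta)$ sweep. The total is $O(m\Delta^2+\Delta^4)$, matching the bound announced for the Cartesian case, which is what ``essentially the same time'' should mean.

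\emph{Main obstacle.} The crux is \emph{uniqueness}. For Cartesian products one invokes D\"orfler's theorem to know that $H$ is determined by $H\smallsetminus x$; no such theorem is available for quasi products, and the M\"obius-like examples of Figures \ref{fig:twist2} and \ref{fig:twistNotBundle} show that the global gluing is genuinely subtler than a direct product. One must therefore either prove a free-standing uniqueness statement for quasi products arising by a single-vertex deletion — presumably leveraging that $\delta^*$ itself is canonical and satisfies the square property, which should pin the layer structure down up to the ambiguity already inherent in the twisted gluing — or content oneself with the subclass of quasi products for which such uniqueness provably holds. A secondary difficulty is a precise accounting of how deleting $x$ perturbs $\delta$: destroying a square can simultaneously create and destroy $\delta$-relations among nearby edges (via Condition (i) of Definition \ref{def:delta}), and one needs a clean argument that all such perturbations stay inside the $2$-neighborhood of $x$ and can be undone without triggering a cascade through the transitive closure.
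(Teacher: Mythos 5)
The statement you are trying to prove is labelled a \emph{conjecture} in the paper: the authors offer no proof of it, only the motivating remark that quasi Cartesian products have a local product-like structure, and they explicitly defer even the easier Cartesian case (the $O(m\Delta^2+\Delta^4)$ reconstruction) to a subsequent publication. So there is no proof in the paper to compare yours against, and your proposal should be judged on its own terms as an attempted resolution of an open problem. On those terms it is not a proof but a research program, and you have yourself named the step that breaks it: the entire Hagauer--\v{Z}erovnik strategy rests on D\"orfler's uniqueness theorem, which guarantees that a Cartesian product $H$ is determined by $H\smallsetminus x$. No analogue is known for quasi products, and your own discussion of the M\"obius-like examples (Figures \ref{fig:twist2} and \ref{fig:twistNotBundle}) shows why one cannot simply hope the square property of $\delta^*$ pins down the global gluing --- the twist is precisely a global degree of freedom that local square-completion cannot see. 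Without uniqueness, your ``fill in the forced edges and verify'' step is not well-defined: the constraints may admit several consistent completions, or none, and accepting ``any $H'$ that passes the tests'' does not establish that the reconstructed graph is the original one.

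A second, independent gap is the perturbation analysis, which you flag but do not carry out. Deleting $x$ can \emph{add} pairs to $\delta$ via Condition (i) of Definition \ref{def:delta} (a formerly unique chordless square spanned by $e$ and $f$ disappears, throwing $e$ and $f$ into relation $\delta$), and because $\delta^*$ is a transitive closure, a single spurious local pair can merge two $\delta^*$-classes of $H$ that were globally distinct --- an effect that is not confined to the $2$-neighborhood of $x$ even though its cause is. Your appeal to Lemma \ref{lem:vMeetsEveryClass} shows the uncorrupted region meets every class of $\delta(H)^*$, but it does not show that $\delta(G)^*$ restricted to that region refines $\delta(H)^*$; you would need an argument that the corrupted PSPs near $x$ cannot trigger a cascade of mergers through the closure. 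Both gaps are substantive, and the first one in particular is exactly why the paper states this as a conjecture rather than a theorem.
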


\section{Recognition Algorithms}

\subsection{Computing the Local and Global Coloring}

For a given graph $G$, let $W\subseteq V(G)$ be an arbitrary subset of the
vertex set of $G$ such that the induced subgraph $\langle W\rangle$ is
connected. Our approach \w{for the computation} is based on the recognition of all PSP's $S_v$ with
$v\in W$, and \w{subsequent} merging of their local colorings. The subroutine computing
local colorings calls the vertices in BFS-order with respect to an
arbitrarily chosen root $v_0\in W$.

Let \w{now} us briefly introduce several additional notions used in the PSP
recognition algorithm. \w{At the start} of every iteration we assign
pairwise different {\it temporary local colors} to the primal edges of
every PSP. These colors are then merged in subroutine processes to compute
 \emph{local colors} associated with every PSP. Analogously, we
use {\it temporary global colors} that are initially assigned to every edge
incident with the root $v_0$.

For any vertex $v$  \w{of} distance two from a PSP center $c$ we store
attributes called {\it first and second primal neighbor}\w{, that is,}
references to adjacent primal vertices from which $v$ was "visited" (in
pseudo-code attributes are accessed by $v.FirstPrimalNeighbor$ and
$v.SecondPrimalNeighbor$). When $v$ is found to have at least two primal
neighbors we add $v$ to $\mathbb{T}_{c}$\w{,} which is a stack of candidates
for non-primal vertices of $S_{c}$. Finally, we use {\it incidence} and
{\it absence lists} to store recognized squares spanned by primal edges.
Whenever we recognize that two primal edges span a square we put them into
the incidence list.  \w{If} we find out that a pair of primal edges
cannot span a unique chordless square with unique top vertex\w{, then} we put it into
the absence list. Note that the
above structures are local and are always associated with a certain PSP
recognition subroutine (Algorithm \ref{alg:PSPrecognition}).
Finally, we will "map" local colors to temporary global colors via
temporary vectors which helps us to merge local with global colors.

Algorithm \ref{alg:PSPrecognition}  computes a local coloring for given PSP's
and merges it with the global coloring $\R_{|S_v}(W)^*$ where $W\subseteq V(G)$ is the set of
treated centers. Algorithm \ref{alg:deltaRecognition} \w{summarizes} the main control structure of the
local approach.

%\bigskip
%\newpage
\clearpage
\begin{algo}[PSP recognition]
\small
\label{alg:PSPrecognition}
\begin{tabular}{ll}
\emph{Input:} & Connected graph $G=(V,E)$, PSP center $c\in V$, global coloring $\R_{|S}(W)^*$, \\
							 & where $W\subseteq V$ is the set of treated centers and where the subgraph induced \\ &
								by $W\cup c$ is connected.\\
\emph{Output:} & New temporary global coloring $\R_{|S}(W\cup c)^*$.
\end{tabular}
\renewcommand{\baselinestretch}{1.1}\small
   \begin{enumerate}
     \item\label{alg:PSPrecognition0} Initialization.
     \item\label{alg:PSPrecognition1} \For\ every neighbor $u$ of $c$ \Do:
       \begin{enumerate}
       	\item\label{alg:PSPrecognition1a} \For\ every neighbor $w$ of $u$ (except $c$) \Do:
       	\begin{enumerate}
	         \item\label{alg:PSPrecognition1ai} \If\ $w$ is primal w.r.t. $c$ \Then\ put pair of primal edges $(cu, cw)$ to absence list.
         	\item\label{alg:PSPrecognition1aii} \Else \If\ $w$ was not visited \Then\ set $w.FirstPrimalNeighbor=u$.
				\item\label{alg:PSPrecognition1aiii} \Else\ ($w$ is not primal and was already visited) \Do:
				\begin{enumerate}
					\item\label{alg:PSPrecognition1aiiiA} \If\ only one primal neighbor $v$ $(v\neq u)$ of $w$ was recognized so far, then \Do:
					\begin{itemize}
						\item Set $w.SecondPrimalNeighbor=u$.
						\item  \If\  $(cu,cv)$ is not in incidence list\w{,} then add $w$ to the stack $\mathbb{T}_c$  and
													add the pair $(cu,cv)$ to incidence list.
						\item \Else\ ($cu$ and $cv$ span more squares) add pair $(cu,cv)$ to absence list.
					\end{itemize}
					\item\label{alg:PSPrecognition1aiiiB} \Else:
					\begin{itemize}
						\item Add all pairs formed by primal edges $cv_1,cv_2,cu$ to absence list\w{,} where $v_1,v_2$ are first and second primal neighbors of $w$.
					\end{itemize}
       		\end{enumerate}
			\end{enumerate}
     \end{enumerate}
     \item\label{alg:PSPrecognition2} Assign pairwise different temporary local colors to primal edges.
     \item\label{alg:PSPrecognition3} \For\ any pair $(cu,cv)$ of primal edges $cu$ and $cv$ \Do:
			\begin{enumerate}
				\item\label{alg:PSPrecognition3a} \If\ $(cu,cv)$ is contained in absence list \Then\ merge temporary local colors of $cu$ and $cv$.
				\item\label{alg:PSPrecognition3b} \If\ $(cu,cv)$ is not contained in incidence list \Then\ merge temporary local colors of $cu$ and $cv$.
			\end{enumerate}
			(Resulting merged temporary local colors determine local colors of primal edges in $S_c$. We will  \w{reference} them in the following steps.).
		\item\label{alg:PSPRecognition4} \For\ any primal edge $cu$ \Do:
			\begin{enumerate}
				\item  \If\ $cu$ was already assigned some temporary global color $d_1$ \Then
				\begin{enumerate}
					\item\label{alg:PSPRecognition4i} \If\ local color $b$ of $cu$ was already mapped to some
																									 temporary global color $d_2$\w{,} where $d_2\neq d_1$\w{,} \Then\ merge $d_1$ and $d_2$.
					\item\label{alg:PSPRecognition4ii} \Else\ map local color $b$ to $d_1$.
				\end{enumerate}
			\end{enumerate}
			\item\label{alg:PSPRecognition5} \For\ any vertex $v$ from stack $\mathbb{T}_c$ \Do:
			\begin{enumerate}
				\item Check local colors of primal edges $cw_1$ and $cw_2$ (where $w_1,w_2$ are first and second primal neighbor of $v$, respectively).
				\item \If\ they differ in local colors \Then
				\begin{enumerate}
					\item \If\ there was defined temporary global color $d_1$ for $vw_1$ \Then
					\begin{enumerate}
						\item \If\ local color $b$ of $cw_2$ was already mapped to some temporary global color $d_2$\w{,} where $d_2\neq d_1$ \Then\ merge $d_1$ and $d_2$.
						\item \Else\ map local color $b$ to $d_1$.
					\end{enumerate}
					\item \If\ there was already defined temporary global color $d_1$ for $vw_2$ \Then:
					\begin{enumerate}
						\item \If\ local color $b$ of $cw_1$ was already mapped to some temporary global color $d_2$\w{,} where $d_2\neq d_1$
									\Then\ merge $d_1$ and $d_2$.
						\item \Else\ map local color $b$ to $d_1$.
					\end{enumerate}
				\end{enumerate}
			\end{enumerate}
		\item\label{alg:PSPRecognition6} Take every edge $e$ of the PSP $S_c$
		                                 that was not colored by any temporary
		                                 global color up to now and assign it
		                                 $d$, where $d$ is the temporary global
		                                 color to which the local color of $e$ or the local color of its opposite primal edge $e'$
                                     was mapped.
\\ (If there is a local color $b$ that was not mapped to any temporary global
color\w{, then} we create \w{a} new temporary global color and assign it to all edges of color $b$).
   \end{enumerate}
\renewcommand{\baselinestretch}{1.}\normalsize
\end{algo}
\normalsize

\bigskip
\begin{algo}[Computation of $\R_{|S_v}(W)^*$]
\small
\label{alg:deltaRecognition}
\begin{tabular}{ll}
\emph{Input:} & A connected graph $G$, \mh{$W\subseteq V(G)$ s.t. the induced subgraph $\la W\ra$  is connected} \\
							& and, an arbitrary vertex $v_0\in W$. \hfill\\
\emph{Output:} & Relation $\R_{|S_v}(W)^*$. \hfill
\end{tabular}
	\begin{enumerate}
		\item\label{alg:delta0} Initialization.
		\item\label{alg:delta1} Set sequence $Q$ of vertices $v_0,v_1,\dots,v_n$ that form $W$ in BFS-order with respect to $v_0$.
		\item\label{alg:delta2} Set $W':=\emptyset$.
		\item\label{alg:delta3} Assign pairwise different temporary global colors to edges incident to $v_0$.
		\item\label{alg:delta4} \For\ any vertex $v_i$ from sequence $Q$ \Do:
		\begin{enumerate}
			\item\label{alg:delta4a} Use Algorithm \ref{alg:PSPrecognition} to compute $\R_{|S_v}(W'\cup v_i)^*$.
			\item\label{alg:delta4b} Add $v_i$ to $W'$.
		\end{enumerate}
	\end{enumerate}
\end{algo}
\normalsize

%%%%%%%%%%%%%%%%%%%%%%%%%%%%%%%%%%%%%%%%%%%%%%%%%%%%%%%%%%%%%%%%%%%%%%%%%%%%%%%%%%%%%%%%%%%%%%%%%%%%%%%%%%%%%%%%%%%%%
%%%%%%%%%%%%%%%%%%%%%%%%%%%%%%%%%%%%%%%%%%%%%%%%%% BASIC STRUCTURES %%%%%%%%%%%%%%%%%%%%%%%%%%%%%%%%%%%%%%%%%%%%%%%%%
%%%%%%%%%%%%%%%%%%%%%%%%%%%%%%%%%%%%%%%%%%%%%%%%%%%%%%%%%%%%%%%%%%%%%%%%%%%%%%%%%%%%%%%%%%%%%%%%%%%%%%%%%%%%%%%%%%%%%

In order to show that Algorithm \ref{alg:PSPrecognition} \w{correctly} recognizes
 the local coloring, we define the (temporary) relations $\alpha_c$ and
$\beta_c$ for a \w{chosen} vertex $c$:
Two \emph{primal} edges of $S_c$ are
\begin{itemize}
	\item in \emph{relation $\alpha_c$} if they are
					 contained in the incidence list  and
	\item in \emph{relation $\beta_c$} if they are contained in the absence list
\end{itemize}
after Algorithm \ref{alg:PSPrecognition} is
 \w{executed}  for $c$.
Note, we denote by $ \notalpha_c$
the complement of $\alpha_c$\w{,} which contains all pairs of primal edges of
PSP $S_c$ that are not listed in the incidence list.

\begin{lem}\label{lem:alpha_1new}
Let $e$ and $f$ be two primal edges of the PSP $S_c$.
%and let $u_1$ and $u_2$ be the first neighbors of $c$ which
%are visited in \ref{alg:PSPrecognition}.
If $e$ and $f$ span a square with some non-primal vertex $w$ as
unique top-vertex, then $(e,f) \in \alpha_c$.
\end{lem}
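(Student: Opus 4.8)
The plan is to trace through the relevant parts of Algorithm~\ref{alg:PSPrecognition} and show that whenever $e = cu$ and $f = cv$ span a (necessarily chordless, by Lemma~\ref{lem:PSP1}) square whose top vertex $w$ is non-primal and unique, the pair $(cu,cv)$ gets placed into the incidence list and is never subsequently moved to the absence list — so that by definition $(e,f)\in\alpha_c$. First I would observe that, since $w$ is non-primal and is adjacent to both $u$ and $v$ (the primal neighbours corresponding to $e$ and $f$), the vertex $w$ is encountered exactly twice in the double loop of Step~\ref{alg:PSPrecognition1}: once as a neighbour $w$ of $u$ and once as a neighbour $w$ of $v$. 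On the first visit, $w$ is not primal (by hypothesis) and not yet visited, so Step~\ref{alg:PSPrecognition1aii} sets one of $u,v$ as $w.FirstPrimalNeighbor$; on the second visit we reach Step~\ref{alg:PSPrecognition1aiii}, and I need to argue we land in case \ref{alg:PSPrecognition1aiiiA} (exactly one primal neighbour recognised so far) rather than \ref{alg:PSPrecognition1aiiiB}.

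The heart of the argument is therefore the \emph{uniqueness} of the top vertex $w$: $|N(w)\cap N(c)| = 2$ means $w$ has exactly two primal neighbours, namely $u$ and $v$. Hence when $w$ is processed the second time, the ``only one primal neighbour recognised so far'' test in \ref{alg:PSPrecognition1aiiiA} succeeds (the second one is $u$ or $v$, whichever is being processed now), and we set the second primal neighbour and then check whether $(cu,cv)$ is already in the incidence list. There are two sub-cases. If $(cu,cv)$ is not yet in the incidence list, $w$ is pushed onto $\mathbb{T}_c$ and $(cu,cv)$ is added to the incidence list, so $(cu,cv)\in\alpha_c$ unless it is later removed. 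If $(cu,cv)$ \emph{is} already in the incidence list, that can only have happened because some \emph{other} non-primal vertex $w'\neq w$ was also found to be a top vertex of a square spanned by $cu$ and $cv$; but then $e=cu$ and $f=cv$ would span two distinct chordless squares with top vertices $w$ and $w'$, contradicting the hypothesis that the square with top vertex $w$ is unique. (Here I would invoke the remark following the definition of ``unique top vertex'': a unique top vertex together with $|N(u)\cap N(w)| = 2$ forces a unique square; one should check that the algorithm's bookkeeping is consistent with this, i.e.\ that the incidence list entry for $(cu,cv)$ genuinely records a second square.) So this sub-case does not arise, and $(cu,cv)$ is inserted into the incidence list exactly once.

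It remains to show $(cu,cv)$ is never moved to the absence list, i.e.\ that $(e,f)\notin\beta_c$. Inspecting the algorithm, pairs of primal edges are added to the absence list only in Steps~\ref{alg:PSPrecognition1ai}, the third bullet of \ref{alg:PSPrecognition1aiiiA}, and \ref{alg:PSPrecognition1aiiiB}. Step~\ref{alg:PSPrecognition1ai} requires the common neighbour to be primal, but our common neighbour $w$ is non-primal, so this cannot add $(cu,cv)$. The bullet in \ref{alg:PSPrecognition1aiiiA} requires $(cu,cv)$ to be already in the incidence list when a \emph{further} vertex is processed — i.e.\ a second square — again contradicting uniqueness of $w$. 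Step~\ref{alg:PSPrecognition1aiiiB} requires a non-primal vertex with \emph{three} primal neighbours; but the only non-primal vertices relevant to $cu$ and $cv$ have exactly the two primal neighbours $u$ and $v$ by uniqueness of the top vertex, and any non-primal vertex $x$ triggering \ref{alg:PSPrecognition1aiiiB} with $cu$ among its primal edges would mean $cu$ lies in a non-unique-top-vertex configuration, which we would need to rule out using Lemma~\ref{lem:PSP1} (a primal edge spanning a square with a unique top vertex cannot simultaneously be a primal edge of a square with a non-unique top vertex without violating the square structure of $S_c$). Hence $(cu,cv)\notin\beta_c$, and combining with the previous paragraph, $(e,f)\in\alpha_c$.

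\textbf{Main obstacle.}
The delicate point is the interaction in the third sub-bullet of \ref{alg:PSPrecognition1aiiiA} and in \ref{alg:PSPrecognition1aiiiB}: I must be certain that no \emph{other} vertex processed during the double loop can cause $(cu,cv)$ to be flagged as spanning more than one square or to be dumped into the absence list. This reduces entirely to the hypothesis that $w$ is a \emph{unique} top vertex (so $u,v$ are its only primal neighbours) combined with Lemma~\ref{lem:PSP1}, which guarantees that a pair of primal edges in different $\R_v^*$-classes spanning a square with a unique top vertex in fact spans a \emph{unique} chordless square; the bookkeeping of the incidence/absence lists then has no opportunity to record a competing square. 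Making this ``no competing square'' argument airtight — including the subtle distinction flagged in the remark after the definition between a unique top vertex and a unique square — is where the real work lies; the loop-tracing itself is routine.
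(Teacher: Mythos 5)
Your first paragraph is essentially the paper's proof and is correct: since $w$ is a \emph{unique} top vertex it has exactly the two primal neighbours $u$ and $v$, so the second time $w$ is encountered in Step \ref{alg:PSPrecognition1a} the algorithm necessarily lands in Step \ref{alg:PSPrecognition1aiiiA}, where the pair $(cu,cv)$ is either already contained in the incidence list or is added to it. At that point you are already done, because $\alpha_c$ is defined simply as the set of pairs present in the incidence list after the algorithm has run for $c$, and the algorithm never removes a pair from the incidence list.

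Everything after that rests on two misconceptions and contains claims that are false. First, you implicitly treat the incidence and absence lists as if entering the latter "moved" a pair out of the former; in fact they are independent structures, a pair may well lie in both $\alpha_c$ and $\beta_c$, and the lemma asserts nothing about $\beta_c$ — so the entire second half of your argument (showing $(e,f)\notin\beta_c$) is unnecessary, and moreover not true in general. Second, in the sub-case where $(cu,cv)$ is already in the incidence list you derive a contradiction with "the hypothesis that the square with top vertex $w$ is unique"; there is no such hypothesis. The paper explicitly warns that a unique top vertex does not force a unique square: $e$ and $f$ may span a second square $cuw'v$ with another (possibly also unique) top vertex $w'$. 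In that situation $(cu,cv)$ enters the incidence list when the first of $w,w'$ is completed and then also enters the absence list when the second is completed; the lemma still holds precisely because membership in $\alpha_c$ only requires presence in the incidence list. The correct fix is to replace your "this sub-case does not arise" argument by the observation that in this sub-case $(e,f)\in\alpha_c$ holds trivially, and to delete the $\beta_c$ analysis altogether; what you describe as the "main obstacle" is a phantom created by conflating a unique top vertex with a unique square.
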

\begin{proof}
Let $e=cu_1$ and $f=cu_2$ be primal edges in $S_c$ that span a square
$cu_1wu_2$ with unique top-vertex $w$, where $w$ is non-primal.
Note, since $w$ is the unique top vertex, the vertices $u_1$ and $u_2$ are its only primal neighbors.
W.l.o.g. assume that for vertex $w$ no first primal neighbor was assigned
and let first $u_1$ and then $u_2$ be visited. In Step
\ref{alg:PSPrecognition1a} vertex $w$ is recognized and the first primal
neighbor $u_1$ is determined in Step \ref{alg:PSPrecognition1aii}. Take
the next vertex $u_2$. Since $w$ is not primal and was already visited, we
are in Step \ref{alg:PSPrecognition1aiii}. Since only one primal neighbor
of $w$ was recognized so far, we go to Step \ref{alg:PSPrecognition1aiiiA}.
If $(cu_1,cu_2)$ is not already contained in the incidence list, it
will be added now and thus, $(cu_1,cu_2)\in \alpha_c$.
\end{proof}

\begin{cor}
\label{cor:alpha_2}
Let $e$ and $f$ be two adjacent distinct primal edges of the PSP $S_c$.
If $(e,f)\in \notalpha_c$, then $e$ and $f$ do not span a square or
span a square with non-unique \tk{or primal} top vertex.
In particular, $\notalpha_c$ contains all pairs $(e,f)$
that do not span any square.
\end{cor}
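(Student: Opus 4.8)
The plan is to prove the contrapositive of Lemma~\ref{lem:alpha_1new} and then read off the corollary directly. Concretely, I would argue: if $(e,f)\in\notalpha_c$, i.e.\ the pair $(cu_1,cu_2)$ is \emph{not} in the incidence list after Algorithm~\ref{alg:PSPrecognition} has been run for $c$, then it cannot be the case that $e$ and $f$ span a square whose unique top vertex is a non-primal vertex $w$. Since by hypothesis $e=cu_1$ and $f=cu_2$ are adjacent, distinct primal edges, the only way they can span a square $cu_1xu_2$ is with the top vertex $x$ being either primal or non-primal; and if non-primal, by the contrapositive of Lemma~\ref{lem:alpha_1new} it cannot be a \emph{unique} top vertex. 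This immediately yields: $e$ and $f$ either span no square at all, or span a square whose top vertex is primal, or span a square whose top vertex is non-primal but not unique. The last sentence of the corollary is then the trivial special case: if no square is spanned at all, then in particular no square with a unique non-primal top vertex is spanned, so $(e,f)\notin\alpha_c$ by Lemma~\ref{lem:alpha_1new}, i.e.\ $(e,f)\in\notalpha_c$.

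In more detail, the forward direction (the heart of the corollary) needs the contrapositive formulation of Lemma~\ref{lem:alpha_1new}: ``if $(e,f)\notin\alpha_c$ then $e$ and $f$ do not span a square with a non-primal \emph{unique} top vertex.'' So the first step is simply to negate both sides of the lemma. The second step is a case analysis on what a square spanned by two adjacent primal edges can look like: its top vertex $x$ is adjacent to both $u_1$ and $u_2$, and $x\neq c$; such an $x$ is either primal (adjacent to $c$) or non-primal. If non-primal, the contrapositive of the lemma forbids it from being the unique top vertex, so there must be a further square with top vertex $x$ spanned by $cu_1$ or $cu_2$ with a third edge — i.e.\ $x$ is a non-unique top vertex. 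Combining the surviving cases gives exactly the statement ``$e$ and $f$ do not span a square or span a square with non-unique or primal top vertex.''

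I anticipate no real obstacle here; the corollary is essentially a restatement of Lemma~\ref{lem:alpha_1new} together with the observation that for adjacent primal edges the top vertex of any spanned square is classified as primal or non-primal, and in the non-primal case uniqueness is governed by the lemma. The only point requiring mild care is to make sure the case ``$x$ primal'' is genuinely left open by the corollary (it is — the corollary only excludes the unique-non-primal case), and that we have not implicitly assumed $\notalpha_c$ is symmetric or that pairs are unordered; since the incidence list is populated with unordered pairs of primal edges in Step~\ref{alg:PSPrecognition1aiiiA}, $\notalpha_c$ is symmetric and this causes no trouble. For the final sentence, I would just note explicitly that ``does not span any square'' is subsumed under ``does not span a square [with a unique non-primal top vertex]'', so Lemma~\ref{lem:alpha_1new} applies vacuously to put every such pair in $\notalpha_c$.
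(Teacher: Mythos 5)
Your treatment of the first statement is fine and matches the paper: it is indeed just the contrapositive of Lemma~\ref{lem:alpha_1new}, combined with the observation that the top vertex of any square spanned by two primal edges is either primal or non-primal, and that in the non-primal case the contrapositive rules out uniqueness. However, your argument for the final sentence contains a genuine logical gap. You write that if no square is spanned, then no square with a unique non-primal top vertex is spanned, ``so $(e,f)\notin\alpha_c$ by Lemma~\ref{lem:alpha_1new}.'' That is denying the antecedent: the lemma has the form ``$P \Rightarrow (e,f)\in\alpha_c$,'' and from $\neg P$ you cannot conclude $(e,f)\notin\alpha_c$ --- that would require the \emph{converse} of the lemma, which is not stated and is in fact false in general (a pair that spans a square with a non-primal but \emph{non-unique} top vertex is still inserted into the incidence list in Step~\ref{alg:PSPrecognition1aiiiA} when the first such top vertex is encountered, so $\alpha_c$ is strictly larger than the set of pairs covered by the lemma's hypothesis). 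The lemma gives only a sufficient condition for membership in $\alpha_c$; it cannot be applied ``vacuously'' to certify membership in $\notalpha_c$.

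To close the gap you must argue directly from the algorithm, as the paper does: if $e=cx$ and $f=cy$ span no square, then $x$ and $y$ have no common non-primal neighbor $w$, and the only place a pair of primal edges is ever added to the incidence list is Step~\ref{alg:PSPrecognition1aiiiA}, which is reached only upon discovering such a common non-primal neighbor. Hence the pair is never inserted, and $(e,f)\in\notalpha_c$. This is a short inspection, but it is an inspection of the algorithm, not a consequence of Lemma~\ref{lem:alpha_1new}.
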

\begin{proof}
The first statement is just the contrapositive of the statement in Lemma \ref{lem:alpha_1new}.
For the second statement observe that
if $e=cx$ and $f=cy$ are two distinct primal edges of $S_c$ that
do not span a square, then the vertices $x$ and $y$ do not have a common
non-primal neighbor $w$. It is now easy to verify that in
none of the substeps of Step \ref{alg:PSPrecognition1}
the pair $(e,f)$ is added to the incidence list,
and thus, $(e,f)\in  \notalpha_c$.
\end{proof}

\begin{lem}\label{lem:beta_1}
Let $e$ and $f$ be two primal edges of the PSP $S_c$ that are in relation
$\beta_c$. Then $e$ and $f$ do not span a unique chordless square with unique
top vertex.
\end{lem}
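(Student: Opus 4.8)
The plan is to argue by contradiction: suppose $e=cx$ and $f=cy$ are primal edges of $S_c$ with $(e,f)\in\beta_c$, yet $e$ and $f$ do span a unique chordless square $cxwy$ with unique top vertex $w$. Then $w$ is a top vertex, and by Lemma~\ref{lem:PSP1} (or directly, since the square is chordless and $w\neq c$) the vertex $w$ is non-primal; moreover uniqueness of the top vertex means $x$ and $y$ are the only primal neighbors of $w$, and $|N(w)\cap N(c)|=2$. I would then trace exactly how the pair $(cx,cy)$ could have been placed into the absence list by Algorithm~\ref{alg:PSPrecognition}, and show each possibility is impossible under these hypotheses.

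The absence list receives a pair of primal edges only in Steps~\ref{alg:PSPrecognition1ai}, \ref{alg:PSPrecognition1aiiiA} (the \Else-branch), and \ref{alg:PSPrecognition1aiiiB}. So the key step is a short case analysis. First, Step~\ref{alg:PSPrecognition1ai} adds $(cu,cw)$ when $w$ is primal; but $\{cx,cy\}$ would force one of $x,y$ to be primal-adjacent to the other, i.e. $x$ and $y$ adjacent, contradicting chordlessness of the square $cxwy$ — so this step cannot insert $(cx,cy)$. Second, Step~\ref{alg:PSPrecognition1aiiiB} fires only when some non-primal vertex has at least two previously-recognized primal neighbors $v_1,v_2$ distinct from the current $u$, i.e. at least three primal neighbors in total; applied to the triple $\{cx,cy,\cdot\}$ this would require a common non-primal neighbor of $x$ and $y$ with a third primal neighbor, contradicting that $w$ is the unique top vertex with exactly the two primal neighbors $x,y$. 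Third, the problematic \Else-branch of Step~\ref{alg:PSPrecognition1aiiiA} adds $(cu,cv)$ precisely when $(cu,cv)$ is already in the incidence list, meaning $c u$ and $cv$ were found to span a square with a previously-processed top vertex; together with the currently-discovered square this gives two squares on $cx,cy$, so $|N(x)\cap N(y)|\ge 2$, contradicting that the square $cxwy$ is the unique square spanned by $e$ and $f$.

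I would then combine these observations: none of the three insertion points can place $(cx,cy)$ in the absence list, so $(e,f)\notin\beta_c$, contradicting the assumption. This proves the contrapositive, hence the lemma. One subtlety to handle carefully is the order of BFS/visiting: the branch taken at $w$ depends on whether $w$ already had a \emph{FirstPrimalNeighbor} assigned and on the order in which $x$ and $y$ are visited among the neighbors of $c$; I would note that by symmetry of the relation $\beta_c$ we may assume $x$ is visited before $y$, and that since $w$ has exactly two primal neighbors it is first assigned \emph{FirstPrimalNeighbor} (from $x$, say) and only reaches Step~\ref{alg:PSPrecognition1aiiiA} once, never Step~\ref{alg:PSPrecognition1aiiiB}. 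A second subtlety is that a \emph{different} non-primal vertex $w'$ adjacent to both $x$ and $y$ could in principle cause an insertion; but such a $w'$ would be a second top vertex for $e,f$, again contradicting uniqueness of the top vertex — so this is absorbed into the Step~\ref{alg:PSPrecognition1aiiiA}/\ref{alg:PSPrecognition1aiiiB} cases.

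The main obstacle I anticipate is being exhaustive and precise about the bookkeeping: making sure that "$(cu,cv)$ already in the incidence list" really does certify a second distinct square (as opposed to, say, a re-visit of the same square), and that no primal pair slips into the absence list through an interaction with vertices outside the square $cxwy$. Pinning down that the incidence list only ever records a pair once a genuine square has been confirmed, and that a re-insertion attempt therefore signals a \emph{second} square, is the crux; everything else is routine inspection of the pseudo-code.
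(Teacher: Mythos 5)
Your proof is correct and follows essentially the same route as the paper's: the paper directly enumerates the three steps (\ref{alg:PSPrecognition1ai}, the \Else-branch of \ref{alg:PSPrecognition1aiiiA}, and \ref{alg:PSPrecognition1aiiiB}) at which a pair can enter the absence list and checks that each one rules out a unique chordless square with unique top vertex, which is exactly your case analysis read in the contrapositive direction. The only blemish is the inequality ``$|N(x)\cap N(y)|\ge 2$'' in your third case, which should be $\ge 3$ (counting $c$ together with the two distinct top vertices) to actually contradict the uniqueness condition $|N(x)\cap N(y)|=2$; the surrounding argument makes the intent clear.
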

\begin{proof}
Let $e=cu_1$ and $f=cu_2$ be primal edges of $S_c$. Then pair $(e,f)$
is inserted to absence list in:
\begin{itemize}
\item[a)] Step \ref{alg:PSPrecognition1ai}, when $u_1$ and $u_2$ are
          adjacent. Then  \w{no} square spanned by $e$ and $f$ can be
          chordless.
\item[b)] Step \ref{alg:PSPrecognition1aiiiA} (\Else-condition),
					when $(e,f)$ is already
          listed in the incidence list and another square spanned by $e$ and
          $f$ is recognized. Thus, $e$ and $f$ do not span a unique square.
\item[c)] Step \ref{alg:PSPrecognition1aiiiB}, when $e$ and $f$ span a square
          with top vertex $w$ that has more than two primal neighbors and
          at least one of the primal vertices $u_1$ and $u_2$ are recognized as
          first or second primal neighbor of $w$. Thus $e$ and $f$ span \w{a}
          square with non-unique top vertex.
\end{itemize}
%\ref{alg:PSPrecognition1ai}, \ref{alg:PSPrecognition1aiiiA} and \ref{alg:PSPrecognition1aiiiB}.
%In Step \ref{alg:PSPrecognition1ai}, we insert pair
\end{proof}

\begin{lem}\label{lem:beta_2}
Relation $\beta^*_c$ contains all pairs of primal edges $(e,f)$ of $S_c$
that satisfy at least one of the following conditions:
\begin{itemize}
\item[a)] $e$ and $f$ span \w{a} square with \w{a} chord.  %\vspace{-0.1in}
\item[b)] $e$ and $f$ span \w{a} square with non-unique top vertex. %\vspace{-0.1in}
\item[c)] $e$ and $f$ span more than one square.
\end{itemize}
\end{lem}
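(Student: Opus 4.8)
The plan is to deal with the three conditions one at a time, and for each to exhibit a short chain of absence‑list pairs joining $e$ and $f$; conditions~(a) and~(b) will be direct, and the sub‑cases of~(c) that are not covered by a chord or by a non‑unique top vertex will be reduced back to~(a) and~(b). Throughout I will use two facts already available: the case analysis in the proof of Lemma~\ref{lem:beta_1}, namely that a pair of primal edges is put into the absence list only in Step~\ref{alg:PSPrecognition1ai}, in the \Else-branch of Step~\ref{alg:PSPrecognition1aiiiA}, or in Step~\ref{alg:PSPrecognition1aiiiB}; and the mechanism behind Lemma~\ref{lem:alpha_1new}, namely that a pair $(cu,cu')$ enters the \emph{incidence} list only at Step~\ref{alg:PSPrecognition1aiiiA}, at the moment the second primal neighbour of a common non‑primal neighbour of $u$ and $u'$ is processed.

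For~(a), I write $e=cu_1$, $f=cu_2$, so the square is $cu_1xu_2$ and its chord is $u_1u_2$ or $cx$. If the chord is $u_1u_2$, then $u_1\sim u_2$ and Step~\ref{alg:PSPrecognition1ai} (applied while scanning the neighbours of $u_1$) puts $(cu_1,cu_2)$ into the absence list, so $(e,f)\in\beta_c\subseteq\beta_c^*$. If the chord is $cx$, then $x\sim c$, hence $cx$ is a primal edge, and since $x$ is adjacent to $u_1$ and to $u_2$, Step~\ref{alg:PSPrecognition1ai} puts both $(cu_1,cx)$ and $(cx,cu_2)$ into the absence list; transitivity then gives $(e,f)\in\beta_c^*$. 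For~(b), the top vertex $w$ is non‑primal and has primal neighbours $u_1,\dots,u_k$ with $k\ge 3$; I order them by the iteration of the loop in Step~\ref{alg:PSPrecognition1} in which they are processed, so that $u_1,u_2$ become the recorded first and second primal neighbours of $w$. Processing any later $u_j$ ($j\ge 3$) brings us to Step~\ref{alg:PSPrecognition1aiiiB}, which puts all three pairs among $cu_1,cu_2,cu_j$ into the absence list; hence $(cu_1,cu_2)\in\beta_c$ and $(cu_1,cu_j)\in\beta_c$ for every $j\ge 3$, so $cu_1,\dots,cu_k$ lie in a single $\beta_c$‑class. As the two edges of the statement are among these, they are $\beta_c^*$‑related.

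For~(c), let $e=cu_1$ and $f=cu_2$ span two distinct squares $cu_1wu_2$ and $cu_1w'u_2$. If either square carries a chord I am done by~(a), so I may assume $u_1\not\sim u_2$; if the top vertex of some square spanned by $e,f$ has more than two primal neighbours I am done by~(b), so I may assume that every non‑primal vertex adjacent to both $u_1$ and $u_2$ --- in particular $w$ and $w'$ --- is a unique top vertex of a chordless square spanned by $e,f$, i.e.\ has exactly $u_1,u_2$ as primal neighbours. Because $u_1\not\sim u_2$, the pair $(cu_1,cu_2)$ can enter the incidence list only via Step~\ref{alg:PSPrecognition1aiiiA} applied to such a vertex. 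Assuming $u_1$ is processed before $u_2$, each such vertex $y$ is still unvisited when $u_1$ is processed (its only primal neighbours are $u_1,u_2$), so $u_1$ is recorded as its first primal neighbour and $y$ reaches Step~\ref{alg:PSPrecognition1aiiiA} during the single scan of the neighbours of $u_2$. Hence, among $w$ and $w'$, whichever is encountered first in that scan inserts $(cu_1,cu_2)$ into the incidence list, and whichever is encountered second then falls into the \Else-branch of Step~\ref{alg:PSPrecognition1aiiiA} and inserts $(cu_1,cu_2)$ into the absence list. Thus $(e,f)\in\beta_c\subseteq\beta_c^*$.

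The routine part is~(a) and~(b); the main obstacle is~(c) in the sub‑case with neither a chord nor a non‑unique top vertex, where the pair is not placed in the absence list by any single square but only because the algorithm, on meeting the second square, finds the pair already present in the incidence list. Making this precise requires exactly the two bookkeeping points isolated above: that the incidence list receives $(cu_1,cu_2)$ only through Step~\ref{alg:PSPrecognition1aiiiA} at a unique top vertex adjacent to $u_1$ and $u_2$, and that both $w$ and $w'$ trigger that step during the same pass over the neighbours of one of $u_1,u_2$.
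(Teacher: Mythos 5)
Your proof is correct and follows essentially the same route as the paper's: a case analysis of (a), (b), (c) that traces which step of Algorithm \ref{alg:PSPrecognition} places each relevant pair into the absence list, with case (c) resolved by the same ``first square fills the incidence list, second square triggers the absence list'' mechanism. The only organizational difference is that you reduce the degenerate sub-cases of (c) to (a) and (b) up front rather than splitting on whether $u_1,u_2$ are the recorded first and second primal neighbours of $w$ and $w'$, which is immaterial.
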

\begin{proof}
Let $e=cu_1$ and $f=cu_2$ be primal edges of w{[PSP]} $S_c$.
\begin{itemize}
\item[a)] If $e$ and $f$ span \w{a} square with \w{a} chord, then $u_1$ and $u_2$ are
          adjacent or the top vertex $w$ of the spanned square is primal and thus,
					there is a primal edge $g=cw$. In the first case, we can conclude
					analogously as in the proof of Lemma
          \ref{lem:beta_1} that  $(e,f)\in \beta_c$.
					In \w{the} second case, we \w{analogously} obtain $(e,g), (f,g) \in \beta_c$
					and therefore,           $(e,f)\in\beta_c^*$.
\item[b)] Let $e$ and $f$ span a square with non-unique top vertex $w$.
%						Hence, there is another primal edge $g=cu_3$ that spans
%            a square with $e$, resp., $f$..					
					If at
          least one of the primal vertices $u_1,u_2$ is a first or second
          neighbor of $w$ then $e$ and $f$ are listed in the absence
          list, as shown in the proof of Lemma \ref{lem:beta_1}. If
          $u_1$ and $u_2$ are neither first nor second primal neighbor\w{s} of
          $w$\w{,} then both edges $e$ and $f$ will be added to the absence list in Step
          \ref{alg:PSPrecognition1aiiiB}\w{,} together with the primal edge $g=cu_3$, where
 					$u_3$ is the first recognized primal
          neighbor of $w$. In other words, $(e,g),(f,g)\in\beta_c$ and hence,
          $(e,f)\in\beta_c^*$.
\item[c)] Let $e$ and $f$ span two squares with top vertices $w$ and $w'$,
          respectively and assume w.l.o.g. that first vertex $w$ is visited and then
					$w'$.
          If both vertices $u_1$ and $u_2$ are recognized as first and
          second primal neighbor\w{s} of $w$ and $w'$, then $(cu_1,cu_2)$ is added to
					the incidence list when visiting $w$ in Step
          \ref{alg:PSPrecognition1aiiiA}. However, when we visit $w'$\w{,} then
 					we insert $(cu_1,cu_2)$ to the absence list in Step
          \ref{alg:PSPrecognition1aiiiA},  \w{because} this pair is
	        already included in the incidence list.
					Thus, $(e,f)\in \beta_c$.
					If at least one of the vertices
          $w,w'$ does not have $u_1$ and $u_2$ as first or second
          primal neighbor\w{,} then $e$ and $f$ must span a square with non-unique
          top vertex. Item b) implies that $(e,f)\in\beta_c^*$.
\end{itemize}
\end{proof}

\begin{lem}\label{lem:beta_3}
Let $f$ be a non-primal edge and $e_1,e_2$ be two distinct primal edges of $S_c$.
Let $(e_1,f),(e_2,f)\in \R_c$. Then $(e_1,e_2)\in\beta^*_c$.
\end{lem}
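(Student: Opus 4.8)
The plan is to reduce the statement to Lemma~\ref{lem:beta_2}, which already guarantees that $\beta^*_c$ contains every pair of primal edges of $S_c$ that spans a square with a chord, a square with a non-unique top vertex, or more than one square. So the whole task is to show that the hypotheses force $(e_1,e_2)$ into one of these three ``defect'' situations, possibly through one intermediate primal edge.

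First I would fix coordinates around the non-primal edge $f$. Since $f$ is non-primal, Lemma~\ref{lem:PSP2} (equivalently, Definition~\ref{def:starproduct}) tells us that $f$ is an opposite edge of a chordless square of $S_c$ spanned by two primal edges; write this square as $c,p,t,q$ with edge set $\{cp,pt,tq,qc\}$, top vertex $t$, and $f=tq$ (the roles of the two non-primal edges $pt$ and $tq$ being symmetric). Then $c,p,q,t$ are four pairwise distinct vertices with $p\sim t$, $q\sim t$, and $c\not\sim t$, the last because $t$ is a non-primal vertex of $S_c$.

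Next I would unpack, for each $i\in\{1,2\}$, what $(e_i,f)\in\R_c$ means: since $e_i=cu_i$ is primal this is just $(e_i,f)\in\delta_G$, and because $t$ is non-primal while $u_i$ is primal, the only vertex $e_i$ and $f$ can share is $q$. Hence there are two mutually exclusive cases. In \emph{Case~(i)}, $u_i=q$, i.e.\ $e_i=cq$; then $cq$ and $tq$ are adjacent, so Definition~\ref{def:delta}(i) forbids them from spanning a unique chordless square, and since $p$ is always a common neighbour of $c$ and $t$ distinct from $q$ while $c\not\sim t$, this yields that either $p\sim q$ (so $pq$ is a chord of the square $c,p,t,q$) or $t$ has a primal neighbour $r\notin\{p,q\}$. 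In \emph{Case~(ii)}, $u_i\ne q$; then $e_i$ and $f$ are non-adjacent, so Definition~\ref{def:delta}(ii) applies, $e_i$ and $f$ are opposite edges of a chordless square, and reading off the $4$-cycle on $\{c,u_i,t,q\}$ (using $c\not\sim t$) forces $u_i\sim t$ and $u_i\not\sim q$.

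It remains to combine the cases, noting that $e_1\ne e_2$ rules out both edges being $cq$, so at most one is in Case~(i). If both are in Case~(ii), then $u_1,u_2$ are distinct primal common neighbours of $c$ and $t$, both different from $q$; if $u_1\sim u_2$ the $4$-cycle on $\{c,u_1,t,u_2\}$ has chord $u_1u_2$, and otherwise it is a chordless square whose top vertex $t$ has the three distinct primal neighbours $u_1,u_2,q$ and is hence non-unique --- either way Lemma~\ref{lem:beta_2} gives $(e_1,e_2)\in\beta^*_c$. If, say, $e_1=cq$ is in Case~(i) and $e_2=cu_2$ is in Case~(ii), then when $p\sim q$ the square $c,p,t,q$ already carries the chord $pq$ (which settles the sub-case $u_2=p$), while in every remaining sub-case $t$ has three pairwise distinct primal neighbours among $\{p,q,u_2\}$, so the chordless square spanned by $e_1$ and $e_2$ through $t$ has a non-unique top vertex and Lemma~\ref{lem:beta_2} again yields $(e_1,e_2)\in\beta^*_c$. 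The step I expect to be most delicate is Case~(i): turning ``$cq$ and $tq$ span no unique chordless square'' into the clean dichotomy above, and then verifying in each sub-case that the square handed to Lemma~\ref{lem:beta_2} is genuinely chordless and that the three relevant neighbours of $t$ are genuinely pairwise distinct; Case~(ii) is essentially forced once the $4$-cycle is drawn.
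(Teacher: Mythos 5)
Your overall strategy --- classify how each $(e_i,f)\in\R_c$ can arise under Definition~\ref{def:delta} and funnel every resulting configuration into Lemma~\ref{lem:beta_2} or the absence list --- is exactly the paper's. The problem is your opening move. By invoking Lemma~\ref{lem:PSP2} you commit to $f\in F_c$, so that $f$ joins a primal vertex $q$ to the non-primal top vertex $t$ of a chordless square $cptq$ of $S_c$. The paper's proof uses only the weaker fact that $f$ is not incident with the center $c$, and that weaker reading is the one the lemma is actually put to work with: in the proof of Lemma~\ref{lem:primalIFF} it is applied to an edge $e_i$ of an $\R_c$-chain that is merely non-incident with $c$ and need not lie in $S_c$ at all. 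Worse, under your literal reading the hypotheses can never be met: since $f\in F_c$ arises from a pair of primal edges not in relation $\R_c^*$, Lemma~\ref{lem:PSP1} makes $t$ a \emph{unique} top vertex, i.e.\ $N(t)\cap N(c)=\{p,q\}$ with $cptq$ chordless; then $cq$ and $f=tq$ span the unique chordless square $qcpt$, so $(cq,f)\notin\delta_G$, and $cp$ is the only primal edge in relation $\R_c$ with $f$ --- exactly the content of Lemma~\ref{lem:PSP2}. Hence two distinct $e_1,e_2$ never exist and you are proving a vacuous statement. The configurations the paper must actually dispose of --- e.g.\ $f$ joining two primal vertices so that $e_1,f,e_2$ close triangles through $c$, or $f$ opposite to $e_i$ in a square whose fourth vertex is primal (cases a), b) and c) of the paper's proof) --- never enter your analysis, because your coordinates exclude them from the start.

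Separately, even inside your own setting the assembly of the mixed case is off. If $e_1=cq$ is in your Case~(i) and $e_2=cu_2$ in Case~(ii) with $u_2=p$, then Case~(ii) gives $p=u_2\not\sim q$, so the branch ``$p\sim q$'' that you claim ``settles the sub-case $u_2=p$'' is unavailable, and the set $\{p,q,u_2\}=\{p,q\}$ has only two elements, so your claim that every remaining sub-case exhibits three pairwise distinct primal neighbours of $t$ among $\{p,q,u_2\}$ fails there. The repair is the other branch of your own Case~(i) dichotomy: $p\not\sim q$ forces a primal neighbour $r\notin\{p,q\}$ of $t$, and then $p,q,r$ witness that $t$ is a non-unique top vertex of the chordless square spanned by $cq$ and $cp$, after which Lemma~\ref{lem:beta_2} applies as you intend.
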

\begin{proof}
Since the edge $f$ is non-primal, $f$ is not incident with the center $c$.
Recall, by the definition of $\R_c$, two distinct edges can be in relation $\R_c$ only if they have \w{a} common vertex
or are opposite edges in a square. To prove our lemma we need to investigate
\w{the} three following cases, which are also \w{illustrated} in Figure \ref{fig:proof_beta1}:
\begin{itemize}
\item[a)] Suppose both edges $e_1$ and $e_2$  \w{are incident} with $f$.
					Then $e_1$ and $e_2$ span \w{a} triangle and consequently $(e_1,e_2)$ will be
					added to the absence list in Step \ref{alg:PSPrecognition1ai}.
\item[b)] Let $e_1$ and $e_2$ be opposite to $f$ in some squares. There are
          two possible cases (see Figure \ref{fig:proof_beta1} b)). In the
          first case $e_1$ and $e_2$ span a square with non-unique top
          vertex. By Lemma \ref{lem:beta_2}, $(e_1,e_2)\in\beta^*_c$. In the
          second case $e_1$ and $e_2$ span triangles with other primal
          edges $e_3$ and $e_4$. As in  \w{Case} a) of this proof, we have $(e_1,e_3)\in\beta_c$,
          $(e_3,e_4)\in\beta_c$, $(e_4,e_2)\in\beta_c$ and consequently,
          $(e_1,e_2)\in\beta^*_c$
\item[c)] Suppose only  $e_1$ has a common vertex with $f$
					and $e_2$ is opposite to
          $f$ in a square. Again we need to consider two cases (see
          Figure \ref{fig:proof_beta1} c)).
					Since $e_1$ and $f$ are adjacent and $(e_1,f)\in \R_c$,
					we can conclude that either no square is spanned by $e_1$ and $f$\w{,}			
				 or \w{that} the square spanned by $e_1$ and $f$
          is not chordless or not unique.
					Its easy to see that in the first case the edges $e_1$ and $e_2$
          are contained in a common triangle and thus will be added to the absence list
					in Step \ref{alg:PSPrecognition1ai}.
					In the second case $e_1,e_2$ span a
          square which has
          a chord or \tk{has a non-unique top vertex}. In both cases
          Lemma \ref{lem:beta_2} implies that $e_1$ and $e_2$ are in relation
          $\beta^*_c$.
\end{itemize}
\end{proof}

\begin{figure}[tbp]
\centering
\includegraphics[bb= 128 286 472 501, scale=.8]{./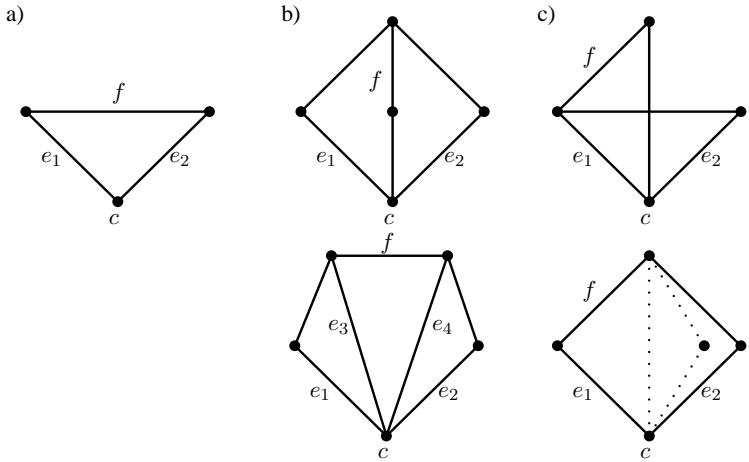}
\caption{The three \w{possible} cases a), b) and c) that are investigated in the
					Proof of Lemma \ref{lem:beta_3}.}\label{fig:proof_beta1}
\end{figure}

\begin{lem}
Let $e$ and $f$ be distinct primal edges of the PSP $S_c$.
Then $(e,f)\in (\notalpha_c\cup\beta_c)^*$ if and only if $(e,f)\in\R_c^*$.
\label{lem:primalIFF}
\end{lem}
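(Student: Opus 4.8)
The plan is to prove the two inclusions $(\notalpha_c\cup\beta_c)^*\subseteq\R_c^*$ and $\R_c^*\subseteq(\notalpha_c\cup\beta_c)^*$ separately, in each case reducing the transitive-closure statement to a statement about the generating pairs. For the first direction it suffices to show that every generating pair lies in $\R_c^*$, i.e. that $\notalpha_c\subseteq\R_c^*$ and $\beta_c\subseteq\R_c^*$. For $\beta_c$ I would invoke Lemma~\ref{lem:beta_1}: if $(e,f)\in\beta_c$ then $e$ and $f$ are adjacent primal edges that do not span a unique chordless square with unique top vertex, so Condition~(i) of Definition~\ref{def:delta} fails to be violated — hence $(e,f)\in\delta_G$, and since $e,f\in E_c$ we get $(e,f)\in\R_c\subseteq\R_c^*$. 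For $\notalpha_c$ I would argue via Corollary~\ref{cor:alpha_2}: if $(e,f)\in\notalpha_c$ then $e$ and $f$ either span no square or span a square with a non-unique or primal top vertex; in either case $e$ and $f$ do not span a unique chordless square, so again $(e,f)\in\delta_G\cap(E_c\times E_c)\subseteq\R_c$. Thus $\notalpha_c\cup\beta_c\subseteq\R_c$, and taking transitive closures gives $(\notalpha_c\cup\beta_c)^*\subseteq\R_c^*$.

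For the converse direction, $\R_c^*\subseteq(\notalpha_c\cup\beta_c)^*$, the work is to show that every generating pair of $\R_c$ restricted to primal edges is already in $(\notalpha_c\cup\beta_c)^*$. Here one must be careful: $\R_c$ contains pairs $(e,f)$ where at least one of $e,f$ is primal, but the claim is only about pairs of \emph{primal} edges, so I would split according to how many of the two edges are primal. If both $e$ and $f$ are primal and $(e,f)\in\R_c$, then either $(e,f)\in\alpha_c$ or $(e,f)\in\notalpha_c$; in the latter case we are done immediately, and in the former case $e$ and $f$ span a square recorded in the incidence list, but being in $\R_c\subseteq\delta_G$ they cannot span a \emph{unique chordless} square with unique top vertex, so Lemma~\ref{lem:beta_2} (applying whichever of its cases a),~b),~c) holds — chord present, non-unique top vertex, or more than one square) yields $(e,f)\in\beta_c^*$. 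If instead exactly one endpoint, say $e$, is primal and $f$ is non-primal, I would take a $\R_c^*$-path from $e$ to $f$, but this is the wrong object; the honest route is: for any primal pair $(e,f)$ with $(e,f)\in\R_c^*$, pick a shortest $\R_c$-path $e=g_0,g_1,\dots,g_k=f$, and replace each non-primal intermediate edge. The crucial observation is Lemma~\ref{lem:beta_3}: whenever a non-primal edge $g_i$ sits between two primal edges $g_{i-1},g_{i+1}$ on such a path with $(g_{i-1},g_i),(g_{i+1},g_i)\in\R_c$, we may bridge it, obtaining $(g_{i-1},g_{i+1})\in\beta_c^*$. Since the path starts and ends at primal edges, consecutive non-primal edges cannot occur on a shortest path (two non-primal edges in relation $\R_c$ would have to be opposite in a square, but one checks this situation can be short-circuited), so every non-primal edge on the path is flanked by primal ones; iterating Lemma~\ref{lem:beta_3} produces a path using only primal edges, each consecutive pair of which lies in $\R_c\cap(E_c\times E_c)\subseteq\alpha_c\cup\notalpha_c$, and then the $\alpha_c$-steps are upgraded to $\beta_c^*$-steps via Lemma~\ref{lem:beta_2} exactly as above. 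Concatenating gives $(e,f)\in(\notalpha_c\cup\beta_c)^*$.

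The main obstacle I anticipate is the bookkeeping in the converse direction around non-primal edges on an $\R_c^*$-path: one has to argue that a shortest path between two primal edges never has two consecutive non-primal edges, so that Lemma~\ref{lem:beta_3} always applies with a primal edge on each side. This requires understanding the structure of $\R_c$ on non-primal edges — a non-primal edge can only relate (beyond primal edges) to another edge sharing a vertex with it or opposite to it in a square — and using minimality of the path to exclude the bad configuration, essentially showing any non-primal-non-primal hop can be rerouted through a primal edge at no extra length. Once that structural point is in hand, the rest is a routine assembly: translate each case of "not a unique chordless square with unique top vertex" into the appropriate clause of Lemma~\ref{lem:beta_2}, and chain the resulting $\beta_c^*$ and $\notalpha_c$ relations together.
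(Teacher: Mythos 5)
Your overall architecture (two inclusions; the forward one via the characterizations of $\alpha_c$ and $\beta_c$; the reverse one by walking along an $\R_c$-chain and bridging non-primal intermediate edges with Lemma~\ref{lem:beta_3}) is the same as the paper's, but the forward direction as you argue it contains a genuine error. You claim $\notalpha_c\cup\beta_c\subseteq\R_c$ by inferring from ``$e$ and $f$ do not span a unique chordless square with unique top vertex'' that Condition~(i) of Definition~\ref{def:delta} is satisfied, hence $(e,f)\in\delta_G$. That inference fails: the negation of ``unique chordless square \emph{with unique top vertex}'' is strictly weaker than the negation of ``unique chordless square''. Concretely, let $e=cu_1$, $f=cu_2$ where $u_1,u_2$ have exactly one common neighbor $w\neq c$, the square $cu_1wu_2$ is chordless, but $w$ is also adjacent to a third neighbor $u_3$ of $c$. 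Then $e$ and $f$ span a unique chordless square, so Condition~(i) is \emph{not} satisfied and $(e,f)\notin\delta_G$, hence $(e,f)\notin\R_c$; yet $w$ is a non-unique top vertex, so this pair lands in $\beta_c$ (via Step~\ref{alg:PSPrecognition1aiiiB}) or in $\notalpha_c$, depending on visiting order. So the inclusion into $\R_c$ itself is false. What is true, and what the paper proves, is only $\notalpha_c\cup\beta_c\subseteq\R_c^*$, and the bridge to the transitive closure in exactly this case is the contrapositive of Lemma~\ref{lem:PSP1}: primal edges not in $\R_c^*$ span a unique chordless square with unique top vertex. You never invoke Lemma~\ref{lem:PSP1}, and without it the non-unique-top-vertex case is not covered.

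On the reverse direction, your anticipated ``main obstacle'' --- two consecutive non-primal edges on an $\R_c$-chain, to be excluded by a shortest-path/minimality argument --- is a non-issue: by definition $\R_c=((E_c\times E)\cup(E\times E_c))\cap\delta_G$, so every pair in $\R_c$ has at least one edge incident to $c$, and two distinct non-primal edges are never $\R_c$-related. Hence every non-primal edge in the chain is automatically flanked by primal ones, and the paper simply deletes the non-primal edges and applies Lemma~\ref{lem:beta_3} to each resulting gap; no rerouting argument is needed. The rest of your reverse direction (covering primal-primal $\R_c$-pairs by Corollary~\ref{cor:alpha_2} together with Lemma~\ref{lem:beta_2}) matches the paper.
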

\begin{proof}
Assume first that $(e,f)\in \notalpha_c\cup\beta_c$.
By Corollary \ref{cor:alpha_2}, if $(e,f)\in \notalpha_c$\w{,} then $e$ and $f$
do not span a common square\w{,} or span a square with non-unique \tk{or primal} top vertex. In the first case, $e$ and $f$ are in relation $\delta_G$ and consequently also in
relation $\R_c$. On the other hand, if $e$ and $f$ span square with
non-unique top vertex then\w{,} by Lemma \ref{lem:PSP1}, $e$ and $f$ are in
relation $\R_c^*$ as well. \tk{Finally, if $e$ and $f$ span a square with primal
top vertex $w$\w{,} then this square has a chord $cw$ and $(e,f)\in\R_c^*$}.
If $(e,f)\in\beta_c$\w{,} then Lemma \ref{lem:beta_1} implies that $e$ and $f$ do not
span unique chordless square with unique top vertex. Again, by Lemma \ref{lem:PSP1}\w{,}
we  \w{infer that} $(e,f)\in\R_c^*$. Hence, $\notalpha_c\cup\beta_c \subseteq \R_c^*$\w{,}
and consequently, $(\notalpha_c\cup\beta_c)^*\subseteq \R_c^*$.

Now, let $(e,f)\in\R_c^*$. Then there is a sequence
$U=(e=e_1,e_2,\dots,e_k=f)$, $k\geq2$\w{,} with $(e_i,e_{i+1})\in\R_c$ for
$i=1,2,\dots,k-1$. By definition of $\R_c$, two primal edges
are in relation $\R_c$ if and only if they do not span a unique and chordless
square. Corollary \ref{cor:alpha_2} and Lemma \ref{lem:beta_2} imply that
all these pairs are contained in $(\notalpha_c\cup\beta_c)^*$. Hence, any
two consecutive primal edges $e_i$ and $e_{i+1}$ contained in the sequence
$U$ are in relation $(\notalpha_c\cup\beta_c)^*$. Assume \w{that} there is an edge
$e_i \in U$ \w{that} is not incident to \w{the} center $c$ and thus,
non-primal. By \w{the} definition of $\R_c$\w{,} and since $(e_{i-1},e_i),(e_i,e_{i+1})\in\R_c$, we can
conclude that the edges $e_{i-1}$ and $e_{i+1}$ must be primal in $S_c$.
Lemma \ref{lem:beta_3} implies that $e_{i-1}$ and $e_{i+1}$ must be in
relation $\beta^*_c$. By removing all edges from $U$ that are not incident
with $c$ we \w{obtain} a sequence $U'=e=e_1,e'_2,\dots,e'_j=f$ of primal edges. By
analogous arguments as before, all pairs $(e'_i,e'_{i+1})$ of $U'$ must be contained
in $(\notalpha_c\cup\beta_c)^*$. By transitivity,
$e$ and $f$ are also in $( \notalpha_c\cup\beta_c)^*$.
\end{proof}

\begin{cor}\label{cor:localColoringDet}
Let $e$ and $f$ be primal edges of \w{the} PSP $S_c$. Then $(e,f)\in (
\notalpha_c\cup\beta_c)^*$ if and only if $e$ and $f$ have the same local
color in $S_c$.
\end{cor}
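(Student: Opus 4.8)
The plan is to read the corollary off directly from the way Algorithm \ref{alg:PSPrecognition} assigns local colors, and to observe that this recipe computes precisely the transitive closure $(\notalpha_c\cup\beta_c)^*$ on the primal edges of $S_c$.

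First I would recall where the local colors of primal edges are actually fixed in Algorithm \ref{alg:PSPrecognition}: Step \ref{alg:PSPrecognition2} assigns pairwise distinct temporary local colors to all primal edges, and Step \ref{alg:PSPrecognition3} iterates over \emph{every} pair $(cu,cv)$ of primal edges and merges their current local colors exactly when $(cu,cv)$ is contained in the absence list (Step \ref{alg:PSPrecognition3a}) or when $(cu,cv)$ is not contained in the incidence list (Step \ref{alg:PSPrecognition3b}). By the definitions of $\beta_c$ and $\notalpha_c$, these are precisely the pairs in $\beta_c\cup\notalpha_c$. Hence the local-color partition of the primal edges is the partition obtained from the discrete partition by successively fusing, for each pair in $\beta_c\cup\notalpha_c$, the two blocks containing its endpoints.

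Next I would invoke the standard union--find observation that the outcome of such a sequence of block fusions is independent of the order in which the pairs are processed: two primal edges end up in the same block if and only if they are linked by a chain of fusion pairs. Thus $e$ and $f$ receive the same local color if and only if there is a sequence $e=g_0,g_1,\dots,g_\ell=f$ of primal edges with $(g_i,g_{i+1})\in\beta_c\cup\notalpha_c$ for each $i$, which is exactly the statement $(e,f)\in(\notalpha_c\cup\beta_c)^*$. Since $e$ and $f$ are primal by hypothesis, and since $\notalpha_c$ and $\beta_c$ relate only primal edges (they are read off the incidence and absence lists, which contain only pairs of primal edges), this yields the claimed equivalence; the case $e=f$ is trivial, and indeed $(e,e)\in\notalpha_c$ since the incidence list never records a pair of equal edges.

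The one point that needs a little care — and the main, though mild, obstacle — is exactly this order-independence of the merging, together with the bookkeeping that a pair $(cu,cv)$ which happens to lie in \emph{both} the incidence and the absence list (for instance a pair of primal edges spanning more than one square) still triggers a fusion via Step \ref{alg:PSPrecognition3a}, so that no fusion pair from $\beta_c\cup\notalpha_c$ is inadvertently skipped. Once this is checked the corollary is immediate; combined with Lemma \ref{lem:primalIFF} it moreover gives that two primal edges of $S_c$ have the same local color if and only if they are in relation $\R_c^*$, which is what is needed for the correctness of the local coloring.
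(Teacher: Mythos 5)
Your proof is correct and follows essentially the same route as the paper, whose own (one-line) proof likewise reads the corollary off the color assignment in Step~\ref{alg:PSPrecognition2} and the merging in Step~\ref{alg:PSPrecognition3}; you simply spell out the union--find, order-independence argument that the paper leaves implicit, including the worthwhile observation that a pair lying in both lists is still fused via the absence list. The only caveat is that if ``local color'' is read in the sense of the formal definition of $\R_{|S_c}$ (the classes of $\R_c^*$ restricted to $S_c$) rather than the colors computed by the algorithm, then Lemma~\ref{lem:primalIFF} is a needed ingredient of the proof rather than the afterthought you make of it --- and indeed the paper cites it explicitly alongside the merging procedure.
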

\begin{proof}
	This is an immediate consequence of Lemma \ref{lem:primalIFF},
	the local color assignment\w{,} and the merging procedure
  (Step \ref{alg:PSPrecognition2} and \ref{alg:PSPrecognition3}) in
	Algorithm  \ref{alg:PSPrecognition}.
\end{proof}

%For proving the correctness of Algorithm \ref{alg:PSPrecognition}, we need
%the following result \cite[Lemma 3.7]{HeImKu-2013}:

\begin{lem}\label{lem:PSP_recognition}
Let $\R_{|S_v}(W)^*$ be a global coloring associated with a set of treated
centers $W$ and assume that the induced subgraph $\la W \ra$ is connected.
Let $c$ be a vertex that is not contained in $W$ but \w{adjacent to a vertex in $W$}.
Then Algorithm \ref{alg:PSPrecognition} computes the global coloring
$\R_{|S_v}(W\cup c)^*$ by taking $W$ and $c$ as input.
\end{lem}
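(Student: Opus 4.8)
The plan is to argue that Algorithm~\ref{alg:PSPrecognition} does two things correctly, and then combine them by transitivity. First, the algorithm must correctly compute the \emph{local coloring} $\R_{|S_c}$ on the primal edges of $S_c$; second, it must correctly \emph{merge} this local coloring into the incoming global coloring $\R_{|S_v}(W)^*$ via the temporary-global-color map, so that the output relation is exactly $\R_{|S_v}(W)^* \cup \R_{|S_c}$ up to transitive closure, which by definition equals $\R_{|S_v}(W\cup c)^*$. The first part is already essentially in hand: by Corollary~\ref{cor:localColoringDet}, two primal edges of $S_c$ receive the same local color if and only if they lie in $(\notalpha_c\cup\beta_c)^* = \R_c^*$ restricted to primal edges, which is precisely the local coloring $\R_{|S_c}$ on $E_c$. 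What remains for the local part is to check that Step~\ref{alg:PSPRecognition6} extends the coloring to the non-primal edges of $S_c$ in the one way consistent with the square property: by Lemma~\ref{lem:PSP2}, each non-primal edge $f\in F_v$ is opposite to exactly one primal edge $e$ with $(e,f)\in\R_{|S_c}$, so assigning $f$ the color of $e$ is forced and correct.

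Next I would handle the merging. The invariant to maintain across the BFS loop of Algorithm~\ref{alg:deltaRecognition} is: after treating the centers in $W'$, the temporary global colors partition the edges of $\cup_{v\in W'}S_v$ into exactly the classes of $\R_{|S_v}(W')^*$. When we process a new center $c$ adjacent to $W'$, Steps~\ref{alg:PSPRecognition4} and~\ref{alg:PSPRecognition5} identify, for each primal edge (and each edge incident to a stack vertex) of $S_c$ that already carries a temporary global color, the corresponding local color, and record a ``local color $\mapsto$ global color'' mapping; whenever a local color would map to two distinct global colors, those are merged. I would argue that this is exactly the operation of forming the transitive closure of $\R_{|S_v}(W')^* \cup \R_{|S_c}$: two edges end up with the same temporary global color afterwards iff they are connected by an alternating chain of $\R_{|S_v}(W')^*$-steps and $\R_{|S_c}$-steps. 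The non-obvious direction is completeness: I must show that \emph{every} $\R_{|S_c}$-equivalence among already-globally-colored edges is actually witnessed by some edge that the algorithm inspects in Steps~\ref{alg:PSPRecognition4}--\ref{alg:PSPRecognition5}. This is where Lemma~\ref{lem:vMeetsEveryClass} is the key tool: since $\la W'\cup c\ra$ is connected and $W'$ is nonempty, each vertex of $W'$ meets every $\R_{|S_v}(W'\cup c)^*$-class, and in particular the primal edges of $S_c$ through $c$'s neighbor in $W'$ (together with the opposite-edge pairs recorded on stack vertices) already carry global colors from the previous iteration, so every local class of $S_c$ contains a previously-colored representative. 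Finally, Step~\ref{alg:PSPRecognition6} colors any remaining edges (local classes with no previously-colored representative) by propagating along the local color, creating a fresh global color only when the whole local class is new — which is consistent because a genuinely new local class of $\R_{|S_c}$ contributes a new $\R_{|S_v}(W'\cup c)^*$-class exactly when it is disjoint from all old ones.

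The main obstacle I anticipate is precisely the bookkeeping in the completeness argument for the merge: one has to rule out the possibility that two local classes of $S_c$ \emph{should} be merged into one global class because of a chain that passes through $S_c$ itself (i.e.\ a non-primal edge of $S_c$ linking two primal classes), yet the algorithm's local-color map — which only sees primal edges directly in Steps~\ref{alg:PSPRecognition4}--\ref{alg:PSPRecognition5} — fails to notice it. Here Lemma~\ref{lem:PSP2} is again what saves the argument: a non-primal edge of $S_c$ is opposite to exactly one primal edge and carries its color, so it cannot create a new identification among primal classes beyond what $\R_{|S_c}$ already records; and any cross-$S_v$ chain through such an edge is accounted for when the center of the other PSP containing that edge was (or will be) processed, so by induction on the BFS order nothing is lost. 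I would also need a short remark that the order of merges does not matter (union–find style), so that the final partition is independent of the BFS traversal and equals the transitive closure claimed. Once these pieces are assembled, the statement follows: Algorithm~\ref{alg:PSPrecognition} transforms a correct representation of $\R_{|S_v}(W)^*$ into a correct representation of $\R_{|S_v}(W\cup c)^*$.
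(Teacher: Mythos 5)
Your proposal follows essentially the same route as the paper's proof: correctness of the local coloring via Corollary~\ref{cor:localColoringDet}, extension to non-primal edges via Lemma~\ref{lem:PSP2}, and completeness of the local-to-global merge via Lemma~\ref{lem:vMeetsEveryClass}, concluding that the merged temporary global colors realize $(\R_{|S_v}(W)^*\cup\R_{|S_c})^*=\R_{|S_v}(W\cup c)^*$. The only cosmetic difference is that the paper uses Lemma~\ref{lem:vMeetsEveryClass} to conclude outright that every local class meets an already-colored global class, so the fallback of creating a fresh global color in Step~\ref{alg:PSPRecognition6} never actually occurs, whereas you keep that case open.
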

\begin{proof}
Let $W\subseteq V(G)$ be a set of PSP centers and let $c\in V(G)$ be a
given center of PSP $S_c$ where $c\not\in W$ \tk{and $\langle W\cup c\rangle$ is connected}.
In Step \ref{alg:PSPrecognition1} of Algorithm \ref{alg:PSPrecognition}
we compute the absence and incidence lists. In Step \ref{alg:PSPrecognition2},
we assign pairwise different temporary local colors to any primal edge adjacent
to $c$. Two
temporary local colors $b_1$ and $b_2$ are then merged in Step
\ref{alg:PSPrecognition3} if and only if there exists some pair of primal
edges $(e_1,e_2)\in(\notalpha_c\cup\beta_c)$ where
$e_1$ is colored with $b_1$ and $e_2$ with $b_2$. \w{Therefore,}
\w{merged}
temporary local colors reflect equivalence classes
of $(\notalpha_c\cup\beta_c)^*$ containing the primal edges incident to $c$.
By Corollary \ref{cor:localColoringDet}, $(\notalpha_c\cup\beta_c)^*$
classes indeed determine the local colors of primal edges
in $S_c$.

Note, if one knows the colors of primal edges incident
to $c$, \w{then} it is very easy to determine the set of
non-primal edges of $S_c$, as any two primal edges of different
equivalence classes span a unique and chordless square.
%In other words, local colors of primal edges are computed in Step
%\ref{alg:PSPrecognition3} by merging of temporary local colors. When local
%colors of primal edges are known it is very easy to determine the set of
%non-primal edges of $S_c$.
In Step \ref{alg:PSPRecognition5}, we investigate each vertex $v$ from
stack $\mathbb{T}_c$ and check the local colors of primal edges $cw_1$ and $cw_2$\w{,}
where $w_1$ and $w_2$ are the first and second recognized primal neighbors of
$v$, respectively. If $cw_1$ and $cw_2$ differ in \w{their} local colors\w{,}
then $vw_1$ and $vw_2$ are non-primal edges of $S_c$\w{, as}  follows from the PSP
construction. Recall that the stack contains all vertices that are at distance
two from center $c$ and which are adjacent to at least two primal vertices. In
other words, the stack contains all \tk{non-primal} top vertices of all squares spanned by
primal edges. Consequently, we claim that all non-primal edges of the PSP $S_c$
are treated in Step \ref{alg:PSPRecognition5}. Note that non-primal edges have
the same local color as \w{their}  opposite primal edge, which is unique by Lemma
\ref{lem:PSP2}.

As we already argued, after Step \ref{alg:PSPrecognition3} is performed we
know\w{,} or can \w{at least} easily determine all edges of $S_c$ and their local colors.
Recall that local colors define the local coloring $\R_{|S_c}$. Suppose,
temporary global colors that correspond to the global coloring
$\R_{|S_v}(W)^*$ are assigned.
Our goal is to modify and identify temporary global colors such that
they will correspond to the global coloring $\R_{|S_v}(W\cup c)^*$. Let
$B_1,B_2,\dots,B_k$ be \w{the} classes of $\R_{|S_c}$ (\emph{local classes}) and
$D_1,D_2,\dots,D_l$ be \w{the} classes of $\R_{|S_v}(W)^*$
(\emph{global classes}).
When a local class $B_i$ and a
global class $D_j$ have a nonempty
intersection\w{, then} we can infer that all their edges must be contained in a
common class of $\R_{|S_v}(W\cup c)^*$.
Note, by means of Lemma \ref{lem:vMeetsEveryClass},
we can conclude that for each local class $B_i$ there is a global class $D_j$ such
that $B_i\cap D_j\neq \emptyset$, see also \cite{HeImKu-2013}.
In that case we need to guarantee that edges of
$B_i$ and $D_j$ will be colored by the same temporary global color. Note,
in the beginning of the iteration two edges have the same temporary global
color if and only if they lie in a common global class.

In Step \ref{alg:PSPRecognition4} and Step \ref{alg:PSPRecognition5}, we
investigate all primal and non-primal edges of $S_c$. When we treat first
edge $e$ that is colored by some local color $b_i$, that is $e\in B_i$,
and has already \w{been} assigned some temporary global color
$d_j$, and \w{therefore}  $e\in D_j$,  then we map $b_i$
to $d_j$. \w{Thus},  we keep the information that $e\in B_i\cap D_j$.
In Step \ref{alg:PSPRecognition6}, we then assign temporary global color $d_j$
to any edge of $S_c$ that is colored by the local color $b_i$. If
 the local color $b_i$ is already mapped to some temporary global color
$d_j$\w{,} and \w{if} we find another edge of $S_c$ that is colored by $b_i$ and
simultaneously has \w{been} assigned some different temporary global color $d_{j\prime}$\w{,}
then we merge $d_j$ and $d_{j\prime}$ in Step \ref{alg:PSPRecognition4i}.
Obviously this is correct, since $B_i\cap D_j\neq\emptyset$ and $B_i\cap
D_{j\prime}\neq\emptyset$\w{,} and hence $D_j,D_{j\prime}$ and $B_i$ must be
contained in a common  equivalence class of $\R_{|S_v}(W\cup c)^*$.
\w{Recall,}
for each local class $B_i$ there is a global class $D_j$ such that $B_i\cap
D_j\neq \emptyset$. \w{This} means that every local color is mapped to some
global color\w{,} and \w{consequently there is no}
need to create \w{a} new temporary global color
in Step \ref{alg:PSPRecognition6}.

Therefore, whenever local and global classes  \w{share an} edge\w{, then}
all their edges will have the same temporary global color at the end of Step
\ref{alg:PSPRecognition6}. On the other hand\w{,} when edges of two different
global classes are colored by the same temporary global color\w{,} then both
global classes must be contained in a common class of
$\R_{|S_v}(W\cup c)^*$.

Hence, after \w{the performance of} Step \ref{alg:PSPRecognition6},
the merged temporary global colors determine \w{the equivalence} classes of $\R_{|S_v}(W\cup c)^*$.
\end{proof}

%Let $W\subset V(G)$ be a set of centers and $c\notin W$ be a vertex of a given graph $G$. Take $W$ and $c$ as input for Algorithm \ref{alg:PSPrecognition}. It is easy to show that if a given center $c$ is adjacent at least with one vertex from $W$ then every local color is mapped to at least one temporary global color in Step \ref{alg:PSPRecognition4} or Step \ref{alg:PSPRecognition5}. Consequently, we don't need to create any new temporary global color in Step \ref{alg:PSPRecognition6}.

\begin{lem}
Let $G$ be a connected graph, \mh{ $W\subseteq V(G)$ s.t. $\la W \ra$ is connected}, and 
$v_0$ an arbitrary vertex of $G$. Then
Algorithm \ref{alg:deltaRecognition} computes the global coloring
\mh{$\R_{|S_v}(W)^*$} by taking $G$, $W$ and $v_0$ as input.
\end{lem}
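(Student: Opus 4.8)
The plan is to prove this by induction on the BFS-processing of the vertices in $W$, using Lemma~\ref{lem:PSP_recognition} as the inductive step. The statement claims that Algorithm~\ref{alg:deltaRecognition}, on input $G$, $W$ and $v_0$, correctly produces the global coloring $\R_{|S_v}(W)^*$. First I would observe that Algorithm~\ref{alg:deltaRecognition} iterates over the vertices $v_0,v_1,\dots,v_n$ of $W$ in BFS-order with respect to $v_0$, maintaining a set $W'$ of already-treated centers, and at iteration $i$ it calls Algorithm~\ref{alg:PSPrecognition} with center $v_i$ and the current global coloring $\R_{|S_v}(W')^*$. So the proof amounts to tracking the invariant that after processing $v_i$, the temporary global colors encode exactly the equivalence classes of $\R_{|S_v}(W' \cup v_i)^*$, where $W'$ is the set of centers processed before $v_i$.

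The key steps are as follows. For the base case, after Step~\ref{alg:delta3} the edges incident to $v_0$ receive pairwise distinct temporary global colors; then the first iteration treats $v_0$ itself, and since $W' = \emptyset$ at that point, Algorithm~\ref{alg:PSPrecognition} on input $v_0$ produces (by Lemma~\ref{lem:PSP_recognition}, or directly by Corollary~\ref{cor:localColoringDet}) the coloring $\R_{|S_{v_0}}$, which equals $\R_{|S_v}(\{v_0\})^*$ since it is already transitively closed on $S_{v_0}$. For the inductive step, suppose that after treating $v_0,\dots,v_{i-1}$ the temporary global colors encode $\R_{|S_v}(\{v_0,\dots,v_{i-1}\})^*$; write $W' = \{v_0,\dots,v_{i-1}\}$. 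We need to check that the hypotheses of Lemma~\ref{lem:PSP_recognition} are met when Algorithm~\ref{alg:PSPrecognition} is invoked with center $c = v_i$: namely that $\la W' \ra$ is connected, that $v_i \notin W'$, and that $v_i$ is adjacent to some vertex of $W'$. The third point is exactly where the BFS-order matters: since $\la W\ra$ is connected and the $v_j$ are listed in BFS-order with respect to $v_0$, the BFS-parent of $v_i$ lies in $W$ and precedes $v_i$, hence lies in $W'$; this also shows $\la W' \cup v_i\ra$, and in particular $\la W'\ra$ together with the new vertex, stays connected. Given these hypotheses, Lemma~\ref{lem:PSP_recognition} immediately yields that after the call the temporary global colors encode $\R_{|S_v}(W' \cup v_i)^* = \R_{|S_v}(\{v_0,\dots,v_i\})^*$, closing the induction. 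After the final iteration $W' \cup \{v_n\} = W$, so the output is $\R_{|S_v}(W)^*$ as claimed.

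The main obstacle — really the only non-bookkeeping point — is verifying that the connectivity and adjacency preconditions of Lemma~\ref{lem:PSP_recognition} are continuously satisfied along the BFS enumeration; everything else is an unwinding of the inductive hypothesis against the already-proved Lemma~\ref{lem:PSP_recognition}. I would therefore devote most of the written proof to the BFS argument: by definition of BFS-order on the connected induced subgraph $\la W\ra$, every vertex $v_i$ with $i \geq 1$ has a neighbor among $v_0,\dots,v_{i-1}$ (its discovery edge), so each prefix $\{v_0,\dots,v_j\}$ induces a connected subgraph of $\la W\ra$ and each newly-added center is adjacent to the already-treated set. I would also remark briefly that the initialization in Step~\ref{alg:delta3} is consistent with the convention used in Lemma~\ref{lem:PSP_recognition} that, at the start, two edges share a temporary global color iff they lie in a common global class — here that global class structure is the discrete one on $E_{v_0}$, which is correct since processing $v_0$ as the first center will refine/merge these appropriately.
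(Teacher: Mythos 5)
Your proof is correct and takes essentially the same route as the paper: induction along the BFS enumeration of $W$, with the base case handled by the initialization at $v_0$ and each subsequent iteration justified by Lemma~\ref{lem:PSP_recognition}. You are in fact somewhat more explicit than the paper's own proof in verifying the preconditions of that lemma (that each BFS prefix induces a connected subgraph and that each new center is adjacent to the already-treated set), which the paper leaves implicit.
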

\begin{proof}
In Step \ref{alg:delta1} we define the BFS-order in which
\w{the vertices}
will be processed and store this sequence in $Q$. In Step \ref{alg:delta3} we assign pairwise different
temporary global colors to all edges that are incident with $v_0$. In Step
\ref{alg:delta4} we iterate over all vertices of \mh{the given induced connected 
subgraph $\la W\ra$ of $G$}. For
every vertex we execute Algorithm \ref{alg:PSPrecognition}. Lemma
\ref{lem:PSP_recognition} implies that in the first
iteration, we correctly compute the local colors for $S_{v_0}$\w{,} and consequently also
\mh{$\R_{|S_v}(\{v_0\})^*$}. Obviously, whenever we merge two
temporary local colors of two primal edges in the first iteration\w{, then} we also merge
their temporary global colors.
%Obviously, whenever temporary local colors of two primal edges are merged then in Step \ref{alg:PSPRecognition4} or in Step \ref{alg:PSPRecognition5} of Algorithm \ref{alg:PSPrecognition} we merge also their temporary global colors.
Consequently, the resulting temporary global colors correspond to
the global coloring \mh{$\R_{|S_v}(\{v_0\})^*$} after the first iteration. Lemma
\ref{lem:PSP_recognition} implies that after all iterations are performed, \mh{that is,  
all vertices in $Q$ are processed,}
the resulting temporary global colors correspond to \mh{$\R_{|S_v}(W)^*$ for the given
input set $W\subseteq V(G)$}.
\end{proof}

\w{For the global coloring, Theorem \ref{thm:union_equals_delta} implies that
 $\R_{|S_v}(V(G))^* = \delta_G^*$. This leads to the following corollary.}

\begin{cor}
Let $G$ be a  \w{connected} graph and $v_0$  an arbitrary vertex of $G$. Then
Algorithm \ref{alg:deltaRecognition} computes the global coloring
$\delta_G^*$ by taking $G$, \mh{$V(G)$} and $v_0$ as input.
\label{cor:deltastar}
\end{cor}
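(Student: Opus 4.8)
The plan is to derive Corollary \ref{cor:deltastar} directly from the lemma that immediately precedes it, which establishes that Algorithm \ref{alg:deltaRecognition} correctly computes $\R_{|S_v}(W)^*$ for any $W\subseteq V(G)$ such that $\la W\ra$ is connected, together with Theorem \ref{thm:union_equals_delta}, which states $\R_{|S_v}(V(G))^* = \delta(G)^*$. So the argument is essentially a matter of specializing $W$ to $V(G)$ and checking that the hypotheses of the preceding lemma are satisfied in this case.

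First I would observe that the whole graph $G$ is connected by standing assumption, hence $\la V(G)\ra = G$ is connected, so $W := V(G)$ is a legitimate input for Algorithm \ref{alg:deltaRecognition} in the sense required by that algorithm's specification (and by the hypothesis of the preceding lemma). Next I would invoke the preceding lemma with this choice of $W$: it yields that Algorithm \ref{alg:deltaRecognition}, run on input $G$, $V(G)$, and an arbitrary root $v_0$, outputs exactly the global coloring $\R_{|S_v}(V(G))^*$. Finally I would apply Theorem \ref{thm:union_equals_delta}, which identifies $\R_{|S_v}(V(G))^*$ with $\delta(G)^*$; combining the two equalities shows the algorithm's output is $\delta_G^*$, as claimed.

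Since all the real work has been done in Lemma \ref{lem:PSP_recognition} (correctness of the single PSP-update step) and in the subsequent lemma (correctness of the iterated control structure), there is no substantive obstacle here — the only thing to be careful about is that $v_0$ must lie in $W$; but with $W = V(G)$ this is automatic, so the corollary follows with no further hypotheses on $v_0$ beyond being a vertex of $G$. I would write the proof as a two-line deduction: apply the preceding lemma with $W = V(G)$, then substitute using Theorem \ref{thm:union_equals_delta}.
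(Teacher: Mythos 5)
Your proof is correct and is exactly the argument the paper intends: the corollary is stated immediately after the remark that Theorem \ref{thm:union_equals_delta} gives $\R_{|S_v}(V(G))^* = \delta_G^*$, so the paper's (implicit) proof is precisely your two-step specialization of the preceding lemma to $W = V(G)$ followed by that identification. Your additional checks — that $\la V(G)\ra = G$ is connected and that $v_0 \in W$ is automatic — are sound and make the deduction fully explicit.
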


%%%%%%%%%%%%%%%%%%%%%%%%%%%%%%%%%%%%%
%%%%%%%%%%%%%%%%%%%%%%%%%%%%%%%%%%%%%%%%%%%%%%%%%%%%%%%%%%%%%%%%%%%%%%%%%%%%%%%%%%%%%%%%%%%%%%%%%%%%%%%%%%%%%%%%%%%%%
\subsection{Time Complexity}

We begin with the complexity of merging colors. We have global and local
colors, and will define \emph{local} and \emph{global color graphs}. Both
graphs are acyclic temporary structures. Their vertex sets are the sets of
temporary colors in the initial state. In this state the color graphs have
no edges. Every component is a single vertex and corresponds to an initial
temporary color. Recall that we color edges of graphs, for example the
edges of $G$ or $S_v$. The color of an edge is indicated by a pointer to a
vertex of the color graph. These pointers are not changed, but the colors
will correspond to the components of the color graph. When two colors are
merged, then this reflected by adding an edge between their respective
components.

The color graph is represented by an adjacency list as described in
\cite[Chapter 17.2]{haimkl-2011} or \cite[pp. 34 -37]{imkl-2000}.
Thus, working with the color graph needs $O(k)$ space
when $k$ colors are used.
Furthermore, for every vertex of the color graph we keep an index of the
connected component in which the vertex is contained. We also store the
actual size of every component, that is, the number of vertices of this
component.

Suppose we wish to merge temporary colors of edges $e$ and $f$ that
are identified with vertices $a$, respectively $b$, in the color graph.
We first check whether $a$ and $b$ are contained in the same connected
component by comparing component indices. If the component indices are the
same, then $e$ and $f$ already have the same color, and no action is
necessary. Otherwise we insert an edge between $a$ and $b$ in the color
graph. As this merges the components of $a$ and $b$ we have to update
component indices and the size. The size is updated in constant time. For
the component index we use the index of the larger component. Thus, no
index change is necessary for the larger component, but we have to assign
the new index to all vertices of the smaller component.

Notice that the color graph remains acyclic, as we only add edges between different components.

\begin{lem}
Let $G_0 =(V,E)$ be a graph with $V=\{v_1,\dots,v_k\}$ and $E=\emptyset$.
The components of $G_0$ consist of single vertices. We assign component index
$j$ to every component $\{v_j\}$.
For $i=1, \dots , k-1$ let $G_{i+1}$ denote the graph that
results from $G_{i}$ by adding an edge between two distinct
connected components, say $C$ and $C'$. If $|C| \leq |C'|$, we use the
the component index of $C'$ for the new component and assign it
to every vertex of $C$.

Then every $G_i$ is acyclic, and the total cost of merging colors is $O(k \log_2 k)$.
\label{lem:log2}
\end{lem}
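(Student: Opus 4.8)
The plan is to prove the two assertions of Lemma~\ref{lem:log2} separately, starting with the easy structural claim and then doing the amortized cost analysis in the style of the classical union-by-size argument.

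First I would dispose of acyclicity. By construction $G_0$ has no edges and is therefore trivially acyclic (a forest). Each $G_{i+1}$ is obtained from $G_i$ by adding a single edge between two \emph{distinct} connected components $C$ and $C'$. Adding an edge whose endpoints lie in different components of a forest cannot create a cycle, since any cycle through the new edge would have to contain a second edge-disjoint path between its endpoints, forcing them into the same component of $G_i$. Hence $G_{i+1}$ is again a forest, and by induction every $G_i$ is acyclic. (This is exactly the observation already made in the paragraph preceding the lemma, that the color graph stays acyclic because edges are only added between different components.)

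Next, the cost bound. The only non-constant work performed when merging $C$ into $C'$ (with $|C|\le|C'|$) is reassigning the component index to every vertex of the smaller component $C$, which costs $O(|C|)$. So the total cost is proportional to $\sum$ over all $k-1$ merge steps of the size of the smaller part. I would bound this by charging each index reassignment to the \emph{vertex} whose index changes. The key invariant is: whenever a vertex $v$ has its index reassigned, the size of the component containing $v$ at least doubles. Indeed, after the merge $v$ lies in a component of size $|C|+|C'|\ge 2|C|$, and $v$ was in a component of size $|C|$ just before. Since the component containing any fixed vertex $v$ starts at size $1$ and never exceeds $k$, vertex $v$ can have its index reassigned at most $\lfloor\log_2 k\rfloor$ times. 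Summing over all $k$ vertices gives total reassignment cost $O(k\log_2 k)$; adding the $O(1)$ per-step overhead for the $k-1$ steps (comparing component indices, updating sizes, inserting the edge) is absorbed, so the overall cost is $O(k\log_2 k)$.

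The main thing to get right is the doubling invariant and the bookkeeping of the charging argument — i.e.\ making precise that ``the component of $v$'' is well defined across steps (it only grows, by nesting of components in a forest built by merges) and that its size is monotone non-decreasing in $i$, so that the $\log_2 k$ bound on the number of reassignments per vertex is valid. Once that is in place, the rest is the routine summation $\sum_{v} \log_2 k = k\log_2 k$. I do not expect any serious obstacle; the only subtlety is purely expository, namely stating the amortization cleanly rather than trying to bound each individual $\min(|C|,|C'|)$ directly.
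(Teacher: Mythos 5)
Your proposal is correct and follows essentially the same argument as the paper: acyclicity holds because edges are only added between distinct components, and the cost bound comes from the standard union-by-size doubling argument, charging each index reassignment to a vertex whose component size at least doubles, giving at most $\log_2 k$ reassignments per vertex and $O(k\log_2 k)$ in total. Your write-up is merely more explicit about the charging and monotonicity bookkeeping than the paper's brief version.
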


\begin{proof}
Acyclicity is true by construction.

A vertex is assigned a new component index when its component is merged
with a larger one. Thus, the size of the component at least doubles at
every such step. Because the maximum size of a component is bounded by $k$,
there can be at most $\log_2 k$ reassignments of the component index for
every vertex. As there are $k$ vertices, this means that the total cost of
merging colors is $O(k \log_2 k)$.
\end{proof}

The color graph is used to identify \tk{temporary} local, resp., global colors. Based on this,
we now define the \emph{local} and \emph{global color graph}.

Assigned labels of the vertices of the global color graph are stored in the edge
list, where any edge is identified with at most one such label.
A graph is represented
by an extended adjacency list, where for any vertex and its neighbor
a reference to the edge (in the edge list) that connects them is stored. This
reference allows to access a global temporary color from adjacency list in
constant time. % \mh{how long does it take to creat this reference list?}.

In every iteration of Algorithm \ref{alg:deltaRecognition}, we recognize the PSP
for one vertex by calling Algorithm \ref{alg:PSPrecognition}.
\w{In the following paragraph we introduce several temporary
attributes and matrices that are used in the algorithm.}

 Suppose we execute  \w{an} iteration that recognizes some PSP $S_c$.
To indicate whether a vertex was treated in this iteration we  \w{introduce the} attribute $visited$,
that is, when vertex $v$ is visited in this iteration we set $v.visited=c$.
Any value different from $c$ means
that vertex $v$ was not yet treated in this iteration. Analogously, we
 \w{introduce} the attribute
$primal$ to indicate that a vertex is adjacent to the current center $c$. The attribute
$tempLabel$ maps primal vertices to the indices of rows and columns of the
matrices $incidenceList$ and $absenceList$. For any vertex $v$ that is at
distance two from the center $c$ we store its first and second primal neighbor
$w_1$ and $w_2$ in the attributes $FirstPrimalNeighbor$ and
$SecondPrimalNeighbor$. Furthermore, we need to keep the position of $vw_1$ and
$vw_2$ in the edge list to get their temporary global colors. For this
purpose, we use attributes $firstEdge$ and $secondEdge$. \tk{Attribute}
$mapLocalColor$ helps us to map temporary local colors to the vertices of
the global color graph. Any vertex that is at distance two from the center and
has a least two primal neighbors is a candidate for a non-primal vertex. We
insert them to the $stack$. The temporary structures help to access the required
information in constant time:

%\begin{tabular}{ll}
%Attributes & Purport\\
%\hline
%$v.visited=c$ & vertex $v$ has been already visited in iteration.\\
% $v.primal=c$ & vertex $v$ is adjacent to center $c$.\\
%$incidenceList[v.tempLabel,u.tempLabel]=0$ &pair of primal edges $(cv,cu)$ is missing \\
%																					&in incidence list.\\
% $absenceList[v.tempLabel,u.tempLabel]=1$ &pair of primal edges $(cv,cu)$ was inserted to \\
%																					&absence list.\\
%$v.firstPrimalNeighbor=u$ & $u$ is first recognized primal neighbor of non-primal vertex $v$.\\
%$v.firstEdge=e$ & edge $e$ joins non-primal vertex $v$ with its first recognized primal neighbor (it is used to get temporary global color from edge list).\\
%$b.mapLocalColor=d$ & local color $b$ is mapped to temporary global color $d$ (i.e. there exists an edge that is colored by both colors).\\
%%\caption{Summary of the used attributes and its meaning}
%\end{tabular}

\begin{itemize}
	\item $v.visited=c$ \\ vertex $v$ has been already visited in the current iteration.
	\item $v.primal=c$ \\ vertex $v$ is adjacent to center $c$.
	\item $incidenceList[v.tempLabel,u.tempLabel]=0$ \\ pair of primal edges $(cv,cu)$ is missing in the incidence list.
	\item $absenceList[v.tempLabel,u.tempLabel]=1$ \\ pair of primal edges $(cv,cu)$ was inserted to the absence list.
	\item $v.firstPrimalNeighbor=u$ \\ $u$ is the first recognized primal neighbor of the non-primal vertex $v$.
	\item $v.firstEdge=e$ \\ edge $e$ joins the non-primal vertex $v$ with its first recognized primal neighbor
					(it is used to get \w{the} temporary global color from \w{the} edge list).
	\item $b.mapLocalColor=d$ \\ local color $b$ is mapped to temporary global color $d$ (i.e. there exists an edge that is colored by both colors).
\end{itemize}

Note that the temporary matrices $incidenceList$ and $absenceList$ have
dimension $\deg(c)\times \deg(c)$  and \w{that} all their entries are set to zero in the
beginning of every iteration.
\begin{thm}
For a given connected graph $G=(V,E)$ with maximum degree $\Delta$ and $W\subseteq V $,
Algorithm \ref{alg:deltaRecognition} runs in $O(|E|\Delta)$ time and $O(|E|+\Delta^2)$ space.
\label{thm:compl}
\end{thm}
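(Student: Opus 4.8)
The plan is to bound time and space separately, and to obtain the time bound by first charging a single call of Algorithm~\ref{alg:PSPrecognition} with center $c$ to $\deg(c)\Delta$, and then summing over all centers using $\sum_{c\in V}\deg(c)=2|E|$ together with $\deg(c)\le\Delta$. For the space bound I would simply inventory the data structures. The extended adjacency list and the edge list (one global-colour label per edge) are $O(|E|)$; the per-vertex attributes $visited$, $primal$, $tempLabel$, $FirstPrimalNeighbor$, $SecondPrimalNeighbor$, $firstEdge$, $secondEdge$, $mapLocalColor$ add $O(|V|)=O(|E|)$; and the global colour graph has one vertex per temporary global colour, of which, by the argument in the proof of Lemma~\ref{lem:PSP_recognition} (and since Step~\ref{alg:delta3} already colours every edge incident to $v_0$), none is ever created after Step~\ref{alg:delta3} of Algorithm~\ref{alg:deltaRecognition}, so this graph has at most $\deg(v_0)\le\Delta$ vertices. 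The only structures of superlinear size are the matrices $incidenceList$ and $absenceList$ of dimension $\deg(c)\times\deg(c)\le\Delta\times\Delta$, the stack $\mathbb{T}_c$, and the per-iteration local colour graph, all of which are $O(\Delta^2)$ and are reused across iterations; hence the total space is $O(|E|+\Delta^2)$.

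For the running time of one call of Algorithm~\ref{alg:PSPrecognition} with center $c$ I would show that, apart from the amortised cost of colour merges, it is $O(\deg(c)\Delta)$. Resetting the used $\deg(c)\times\deg(c)$ blocks of the two matrices and the $O(\deg(c))$ relevant entries of $mapLocalColor$ costs $O(\deg(c)^2)$, while the version-stamp convention $v.visited=c$, $v.primal=c$ makes re-initialising the remaining per-vertex data unnecessary. Step~\ref{alg:PSPrecognition1} is a double loop over neighbours $u$ of $c$ and neighbours $w$ of $u$ with $O(1)$ work per inner pass (attribute look-ups, a constant number of matrix writes, at most one push onto $\mathbb{T}_c$), hence $O(\deg(c)\Delta)$. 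Steps~\ref{alg:PSPrecognition2} and~\ref{alg:PSPRecognition4} are $O(\deg(c))$, and Step~\ref{alg:PSPrecognition3} ranges over the $O(\deg(c)^2)$ pairs of primal edges with $O(1)$ table look-ups each. For Steps~\ref{alg:PSPRecognition5} and~\ref{alg:PSPRecognition6} I would first establish the combinatorial bounds $|\mathbb{T}_c|\le\binom{\deg(c)}{2}$ and $|E(S_c)|=\deg(c)+|F_c|=O(\deg(c)^2)$: a vertex is pushed onto $\mathbb{T}_c$ only together with the simultaneous insertion of a pair $(cu,cv)$ into the incidence list, and by the if/else structure of Step~\ref{alg:PSPrecognition1aiiiA} a given pair is inserted there at most once; Lemmas~\ref{lem:PSP1} and~\ref{lem:PSP2} then give that each non-primal vertex of $S_c$ lies in a unique chordless square and each non-primal edge is opposite a unique primal edge, so $|F_c|\le 2|\mathbb{T}_c|$. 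Consequently Steps~\ref{alg:PSPRecognition5} and~\ref{alg:PSPRecognition6} iterate over $O(\deg(c)^2)$ objects with $O(1)$ work each, a temporary global colour of an edge $vw_i$ being available in constant time through $firstEdge$/$secondEdge$ and the edge list. Since $\deg(c)\le\Delta$, every term is $O(\deg(c)\Delta)$.

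Next I would treat the merges via Lemma~\ref{lem:log2}: a colour graph built by union-by-size on $k$ colours incurs $O(k\log_2 k)$ total relabelling work, and each union attempt that finds the endpoints already merged costs only $O(1)$, which is already accounted for inside the loop bounds. The iteration for $c$ uses a fresh local colour graph on $\deg(c)$ colours, contributing $O(\deg(c)\log\deg(c))$, and $\sum_{c\in W}\deg(c)\log\deg(c)\le(\log\Delta)\sum_{c\in V}\deg(c)=2|E|\log\Delta=O(|E|\Delta)$; the single global colour graph never holds more than $\deg(v_0)\le\Delta$ colours, so its total relabelling cost is $O(\Delta\log\Delta)=O(\Delta^2)$, which is $O(|E|\Delta)$ since $G$ is connected and hence $\Delta\le|E|$. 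Finally, the BFS ordering of $\langle W\rangle$ in Step~\ref{alg:delta1} and the one-time initialisation of attributes and of the two $\Delta\times\Delta$ matrices cost $O(|E|+\Delta^2)$, and summing the per-center costs over the $|W|\le|V|$ calls of Algorithm~\ref{alg:PSPrecognition} gives $\sum_{c\in W}O(\deg(c)\Delta)\le\Delta\sum_{c\in V}\deg(c)=O(|E|\Delta)$; together this yields time $O(|E|\Delta)$ and space $O(|E|+\Delta^2)$.

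I expect the main obstacle to be the $O(1)$-per-operation bookkeeping: one must match every pseudo-code access ("is $w$ primal w.r.t.\ $c$", "$(cu,cv)$ in the incidence list", "temporary global colour of $vw_1$", "local colour $b$ mapped to $d$") to a declared attribute or matrix, and verify that the version-stamp trick removes all per-iteration re-initialisation beyond the $O(\deg(c)^2)$ matrix reset — and, hand in hand with this, pin down the bounds $|\mathbb{T}_c|,|E(S_c)|=O(\deg(c)^2)$ coming from Lemmas~\ref{lem:PSP1} and~\ref{lem:PSP2}, and notice that the $\log$ factors from the union-by-size merging are absorbed by $\Delta$ precisely because the relevant colour graphs never exceed $\Delta$ vertices.
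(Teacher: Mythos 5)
Your proposal is correct and follows essentially the same route as the paper: charge each call of Algorithm~\ref{alg:PSPrecognition} with center $c$ a cost of $O(\deg(c)\Delta)$ step by step, sum via $\sum_{c}\deg(c)=2|E|$, inventory the data structures for the $O(|E|+\Delta^2)$ space bound, and absorb the union-by-size merge costs of Lemma~\ref{lem:log2} into these bounds. The only (harmless) divergence is that you bound the stack by $\binom{\deg(c)}{2}$ via incidence-list insertions, where the paper uses the cruder $O(\deg(c)\Delta)$ count of distance-two vertices; both suffice.
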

\begin{proof}
Let $G$ be a given graph with $m$ edges and $n$ vertices.
In Step \ref{alg:delta0} of Algorithm
\ref{alg:deltaRecognition} we initialize all temporary attributes and
matrices. This consumes $O(m+n) = O(m)$ time and space,
since $G$ is connected\w{,} and hence, $m\geq n-1$. Moreover\w{,} we set all
temporary colors of edges in the edge list to zero,  which does not increase the
time and space complexity of the initial step. Recall that we use an extended adjacency
list, where every vertex and its neighbors keep the reference to the edge in
the edge list that connects them. To create an extended adjacency list we iterate over
all edges in the edge list\w{,} and for every edge $uv=e\in E(G)$ we set a new
\w{entry for the} neighbor $v$ for $u$ and\w{,} simultaneously\w{,} we add a reference $v.edge=e$.
The same is done for vertex $v$. It can be done in $O(m)$ time and space.

In Step \ref{alg:delta1} of Algorithm \ref{alg:deltaRecognition},
we build a sequence of vertices in BFS-order
starting with $v_0$, which is done in $O(m+n)$ time in general.
Since $G$ is connected, the BFS-ordering can be computed in $O(m)$ time.
Step \ref{alg:delta2} takes constant time.
In Step \ref{alg:delta3} we
initialize the global color graph that has $\deg(v_0)$ vertices (\w{bounded by} $\Delta$ in
general). As we already showed, all operations on the global color graph take
$O(\Delta\log_2{\Delta})$ time and $O(\Delta)$ space. We proceed to traverse
all neighbors $u_1,u_2,\dots,u_{\deg{(v_0)}}$ of the root $v_0\in V(G)$
(via the adjacency list) and  \w{assign} them unique labels $1,2,\dots,\deg(v_0)$ in edge list,
that is,  every edge $v_0u_i$ gets the label $i$.
In this way, we initialize pairwise different temporary global colors of edges
incident with $v_0$ \w{, that is, to vertices} of the global color graph.
\w{Using} the extended adjacency list\w{,} we set the label to an edge in
the edge list in constant time. In Step \ref{alg:delta4} we run Algorithm
\ref{alg:PSPrecognition} for any vertex from the defined BFS-sequence.

In the remainder of this proof, we will focus on \w{the} complexity of
Algorithm \ref{alg:PSPrecognition}.
Suppose we perform Algorithm \ref{alg:PSPrecognition} for
vertex $c$ to recognize the PSP $S_c$. The recognition process is based on
temporary structures. We do not need to reset any of these structures, for any \w{execution}
of Algorithm \ref{alg:PSPrecognition} \w{for a new center $c$}, except $absenceList$ and
$incidenceList$. \tk{This is done in Step \ref{alg:PSPrecognition0}. 
Further, we set here} the attribute $tempLabel$ for
every primal vertex $v$, such that every vertex has assigned a unique number
from $\{1,2,\dots,\deg(c)\}$. \tk{Finally}, we traverse
all neighbors of the center $c$ and for each of them we set $primal$ to $c$.
Hence, the initial step of Algorithm \ref{alg:PSPrecognition} is done in
$O(\deg(c)^2)$ time.

Step \ref{alg:PSPrecognition1a} is performed for every neighbor of every
primal vertex. The number of all such neighbors is at most $\deg(c)\Delta$.
For every treated vertex, we set attribute $visited$ to $c$. This allows
us to verify in constant time that a vertex was already visited in the
recognition subroutine Algorithm \ref{alg:PSPrecognition}.

If the condition in Step \ref{alg:PSPrecognition1ai} is satisfied, then we
put primal edges $cu$ and $cw$ to the absence list. By the previous arguments,
this can  be done in constant time by usage of $tempLabel$ and $absenceList$.

If the condition in Step \ref{alg:PSPrecognition1aii} is satisfied, we set
vertex $u$ as first primal neighbor of vertex $w$. For this purpose, we
use the attribute $firstPrimalNeighbor$. We also set $w.firstEdge=e$\w{,} where $e$
is a reference to the edge in the edge list that connects $u$ and $w$. This
reference is obtained from the extended adjacency list in constant time. Recall,
the edge list is used to store the labels of vertices of
the global color graph for the edges of a given graph, that is, the assignment
of temporary global colors to the edges.
Using $w.firstEdge$, we are able to directly
access the temporary global color of edge $uw$ in constant time.
%In Step
%\ref{alg:PSPrecognition1aii}, we also \mh{set $w.secondPrimalNeighbor=null$
%check ?? why null??, we dont do that in the pseudocode!!,should write
%if no 2.nd exists we do ssmth into th pseudocode?}
%which indicates that there was still recognized only one primal neighbor of
%$w$. Again the entire latter steps are performed in constant time.

Step \ref{alg:PSPrecognition1aiii} is performed when we try to visit a vertex
$w$ from some vertex $u$ where $w$ has been already visited before from some
vertex $v$. If $v$ is the only recognized primal neighbor of $w$\w{, then we} we perform
analogous operations as in the previous step. Moreover. if $(cu,cv)$ is not contained
in the incidence list, then we set $u$ as second primal neighbor of $w$, \w{add}
$(cu,cv)$ to the incidence list and \w{add}  $w$ to the stack. Otherwise we add
$(cu,cv)$ to the absence list. The number of operations in this step is
constant.

If $w$ has more recognized primal neighbors we process case B. Here we just add all pairs formed
 by $cv_1,cv_2,cu$ to absence list. Again\w{,} the number of operations is constant by usage of $tempLabel$
 and matrices $incidenceList$ and $absenceList$.

In Step \ref{alg:PSPrecognition2}\w{[,]} we assign pairwise different
temporary local colors
to the primal edges. Assume the neighbors of the center $c$ are labeled
by $1,2,\dots,\deg{(c)}$, then we set value $u.tempLabel$ to \tk{$c$}$u$.
%\tk{Here we use again attribute $tempLabel$. Recall that every neighbour $v$ of $c$ 
%receives unique label $v.tempLabel:=i, i\in\{1,2,\dots,\deg(c)\}$ in the initial
%step. Then we
%create vertex with label $v.tempLabel$ in local color graph for every primal edge $cv$.
%Such that attribute $tempLabel$ can be used to access temporary local color of 
%every primal edge in constant time.}
In Step \ref{alg:PSPrecognition3a}
we iterate over all entries of the $absenceList$. For all pairs of edges that
are in the absence list we check whether they \w{still} have  different temporary
local colors and if so, we merge their temporary local colors  by adding
a respective edge in the local
color graph. Analogously we treat all pairs of edges contained in the
$incidenceList$ in Step \ref{alg:PSPrecognition3b}.
Here we merge temporary local colors of primal
edges $cu$ and $cv$ when the pair $(cu,cv)$ is missing. To treat all entries of
the $absenceList$ and $incidenceList$ we need to perform $\deg(c)^2$ iterations.
Recall, the temporary local color of the primal edge $cu$ is equal to the
index of the connected component in the local color graph,
in which vertex $u.tempLabel$ is contained.
Thus, the temporary local color of this primal edge can be accessed in
constant time. As we already showed, the number of all operations on the local
color graph is bounded by $O(\deg(c)\log_2{\deg(c)})$.
Hence, the overall time complexity
of both Steps \ref{alg:PSPrecognition2} and \ref{alg:PSPrecognition3} is $O(\deg(c)^2)$.

In Step \ref{alg:PSPRecognition4} we map temporary local colors of primal
edges to temporary global colors. For this purpose, we use the attribute
$mapLocalColor$. The temporary global color of every edge can be accessed by
the extended adjacency list, the edge list and the global color graph in constant time.
Since we need to iterate over all primal vertices, we can conclude that Step
\ref{alg:PSPRecognition4} takes $O(\deg(c))$ time.

In Step \ref{alg:PSPRecognition5} we perform analogous operations for any vertex
from Stack $\mathbb{T}_c$ as in Step \ref{alg:PSPRecognition4}. In the worst
case,  we add all vertices that are at distance two from the center to the
stack.
Hence, the size of the stack is bounded by $O(\deg(c)\Delta)$. Recall that the
first and second primal neighbor $w_1$ and $w_2$ of every vertex $v$ from the stack can
be directly accessed by the attributes $firstPrimalNeighbor$ and
$secondPrimalNeighbor$. On the other hand, the temporary global colors of
non-primal edges $vw_1$ and $vw_2$ can be accessed directly by the attributes
$firstEdge$ and $secondEdge$. Thus, all needful information can be accessed
in constant time. Consequently, the time complexity of this step is bounded by
$O(\deg(c)\Delta)$.	

In the last \w{step,} Step \ref{alg:PSPRecognition6}, we iterate over all edges of the recognized PSP.
Note, the list of all primal edges can be obtained from the extended adjacency list. To get all
non-primal edges we iterate over all vertices from the stack and use the attributes
$firstEdge$ and $secondEdge$, which takes $O(\deg(c)\Delta)$ time.
The remaining operations can be done in constant time.

To summarize, Algorithm \ref{alg:PSPrecognition} runs in
$O(\deg(c)\Delta)$ time.
Consequently, Step \ref{alg:delta4} of Algorithm
\ref{alg:deltaRecognition} runs in
 $O(\sum_{c\in W}\deg(c) \Delta)= O(m\Delta)$ time,
which defines also the total time complexity of Algorithm \ref{alg:deltaRecognition}.
The most space
consuming structures are the edge list and the extended adjacency list ($O(m)$ space) and the
temporary matrices $absenceList$ and $incidenceList$ $(O(\Delta^2)$ space). Hence,
the overall space complexity is $O(m+\Delta^2)$.
\end{proof}

Since quasi Cartesian products are defined as graphs with non-trivial $\delta^*$,
\w{ Theorem \ref{thm:compl}} and Corollary \ref{cor:deltastar} imply the following \w{corollary}.

\begin{cor}
For a given connected graph $G=(V,E)$ with bounded maximum degree
Algorithm \ref{alg:deltaRecognition} (with slight modifications) determines
whether $G$ is a quasi Cartesian product in $O(|E|)$ time and $O(|E|)$ space.
\end{cor}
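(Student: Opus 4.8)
The plan is to derive the corollary directly from Theorem~\ref{thm:compl} and Corollary~\ref{cor:deltastar} by bookkeeping the complexity when $\Delta$ is treated as a constant, and then describing the ``slight modifications'' that turn the coloring algorithm into a decision procedure. First I would invoke Corollary~\ref{cor:deltastar}: running Algorithm~\ref{alg:deltaRecognition} with input $G$, $W=V(G)$ and an arbitrary root $v_0$ produces (the temporary global colors encoding) the relation $\delta_G^*$. By Theorem~\ref{thm:compl} this takes $O(|E|\Delta)$ time and $O(|E|+\Delta^2)$ space. Since $\Delta$ is bounded by a constant, $O(|E|\Delta)=O(|E|)$ and $O(|E|+\Delta^2)=O(|E|)$, which already gives the claimed resource bounds for computing $\delta_G^*$ itself.

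Next I would explain the modification. Recall that $G$ is a quasi Cartesian product exactly when $\delta_G^*$ is non-trivial, i.e.\ has at least two equivalence classes (as discussed in Section~\ref{sec:twist}). The output of Algorithm~\ref{alg:deltaRecognition} is represented by the global color graph together with, for each edge, a pointer into that color graph and a component index; two edges lie in the same $\delta_G^*$-class iff their component indices agree. So after the algorithm terminates it suffices to scan the edge list once and count the number of distinct component indices actually occurring, or equivalently to test whether all edges share a single component index. This post-processing is a single pass over the $O(|E|)$ edges and hence costs $O(|E|)$ time and no extra space beyond a constant. The algorithm then reports ``quasi Cartesian product'' iff at least two distinct indices are found. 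One may even abort early: as soon as a second distinct component index is seen one can stop, though this does not change the asymptotics.

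The correctness of the decision is immediate from Corollary~\ref{cor:deltastar} (the colors really do encode $\delta_G^*$) and from the definition of quasi Cartesian product via non-triviality of $\delta_G^*$; the complexity is immediate from Theorem~\ref{thm:compl} with $\Delta=O(1)$ plus the linear-time scan. I do not anticipate a genuine obstacle here: the only point requiring a word of care is that the stated space bound $O(|E|+\Delta^2)$ in Theorem~\ref{thm:compl} collapses to $O(|E|)$ only because $\Delta$ is bounded, and that the ``modification'' is nothing more than reading off the number of color classes from the already-computed color graph, so no new data structure or argument is needed. Thus the corollary follows.
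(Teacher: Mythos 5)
Your argument is exactly the one the paper intends: the corollary is stated there without proof, as an immediate consequence of Theorem~\ref{thm:compl} (complexity, with $\Delta=O(1)$), Corollary~\ref{cor:deltastar} (the algorithm computes $\delta_G^*$), and the definition of quasi Cartesian products via non-triviality of $\delta^*$. Your additional remark spelling out the ``slight modification'' as a linear scan counting the distinct color classes is a correct and welcome elaboration of what the paper leaves implicit.
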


\subsection{Parallel Processing}
The local approach allows the parallel computation of $\delta^*(G)$ on
multiple processors. Consider a graph $G$ with vertex set $V(G)$. Suppose
we are given a decomposition of $V(G)=W_1\cup W_2\cup\dots\cup W_k$ into
$k$ parts such, that $|W_1|\approx|W_2|\approx\dots\approx|W_k|$, where
the subgraphs induced by $W_1,W_2,\dots,W_k$ are connected\w{,} and the number
of edges whose endpoints lie in different partitions is small (we call such
decomposition {\it good}). Then algorithm \ref{alg:deltaRecognition} can be
used to compute the colorings
$\R_{|S_v}(W_1)^*,\R_{|S_v}(W_2)^*,\dots,\R_{|S_v}(W_k)^*$, where every
instance of the algorithm can run in parallel. The resulting global
colorings are used to compute
$\R_{|S_v}(V(G))^*=(\R_{|S_v}(W_1)^*\cup\R_{|S_v}(W_2)^*\cup\dots\cup\R_{|S_v}(W_k)^*)^*$.
Let us sketch the parallelization.

\bigskip
\begin{algo}[Parallel recognition of $\delta^*$]
\small
\label{alg:parallelDeltaRecognition}
\begin{tabular}{ll}
\emph{Input:} & A graph $G$, and a good decomposition $V(G)=W_1\cup W_2\cup\dots\cup W_k$.\\
\emph{Output:} & Relation $\delta_G^*$.
\end{tabular}
	\begin{enumerate}
		\item\label{alg:parallel1} \w{F}or every partition $W_i$ \w{concurrently} compute global coloring \tk{$\R_{|S_v}(W_i)$}   ($i\in\{1,2,\dots,k\}$):
		\begin{enumerate}
			\item\label{alg:parallel2} Take all vertices of $W_i$ and order them in BFS to get sequence $Q_i$.
			\item\label{alg:parallel3} Set $W':=\emptyset$.
			\item\label{alg:parallel4} Assign pairwise different temporary global colors to edges incident to first vertex in $Q_i$.
			\item\label{alg:parallel5} For any vertex $v$ from sequence $Q_i$ do:
			\begin{enumerate}
				\item\label{alg:parallel5a} Use Algorithm \ref{alg:PSPrecognition} to compute \tk{$\R_{|S_v}(W'\cup v)^*$}.
				\item\label{alg:parallel5c} Put all edges that were treated in previous \w{step} and have at least one endpoint \w{not in} partition $W_i$ to stack $\mathbb{T}_i$.
				\item\label{alg:parallel5b} Add $v$ to $W'$.
			\end{enumerate}
		\end{enumerate}
		\item Run concurrently for every partition $W_i$ to merge all global colorings ($i\in\{1,2,\dots,k\}$):
		\begin{enumerate}
			\item For each edge from stack $\mathbb{T}_i$\w{,} take all its assigned global colors and merge them.
		\end{enumerate}
	\end{enumerate}
\end{algo}
\normalsize

\begin{figure}
\centering
\includegraphics[bb=93 254 515 545, scale=0.7]{./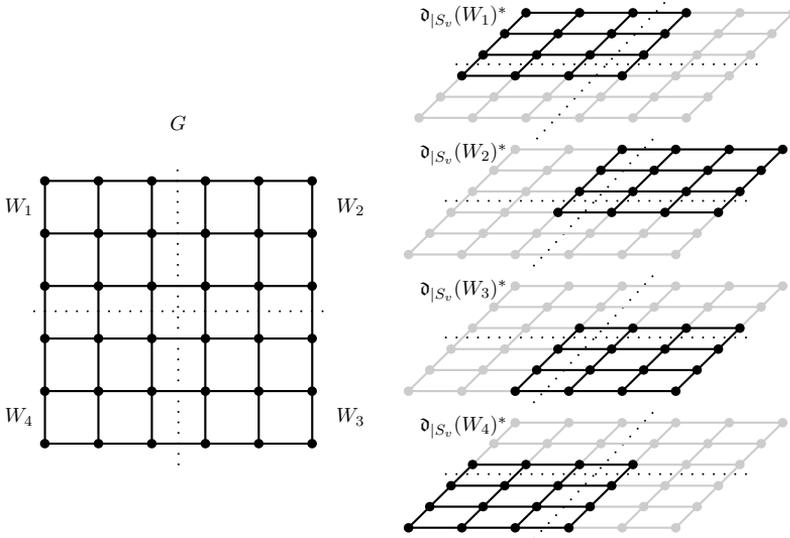}
\caption{Example - Parallel recognition of $\delta^*$.}\label{fig:parallel_delta}
\end{figure}

 Figure \ref{fig:parallel_delta} shows an example of decomposed vertex set
of a given graph $G$. The computation of global colorings associated with
\w{the individual sets of the partition} can be
done then in parallel. The edges that are colored by global color when
\w{the} partition is treated are highlighted by \w{bold} black color. Thus
we can observe that many edges will be colored by more then one color.

\w{Notice that we do not treat} the task of finding a good partition. With the methods of \cite{hazwi-2001}
this is possible with high probability in $O(\log n)$ time, where $n$ is the number of vertices.

\bibliographystyle{plain}
\bibliography{libParRec}

\end{document}